\newcommand{\eps}{\varepsilon}
\DeclareMathOperator*{\e}{\mathbb{E}}
\DeclareMathOperator*{\argmax}{arg\,max}
\newcommand{\alphacmd@factory}[1]{}
\newcounter{alphacmdcounter}
\newcommand{\GenerateAlphabetCmds}[2]{%
    \renewcommand{\alphacmd@factory}[1]{%
        \expandafter\providecommand\csname #1##1\endcsname{{#2{##1}}}%
    }
    \setcounter{alphacmdcounter}{0}
    \loop
        \stepcounter{alphacmdcounter}
        \edef\alphacmd@ID{\@Alph\c@alphacmdcounter}
        \expandafter\alphacmd@factory\alphacmd@ID
    \ifnum\thealphacmdcounter<26
    \repeat
}
\newcommand{\GenerateAlphabetCmdsLower}[2]{%
    \renewcommand{\alphacmd@factory}[1]{%
        \expandafter\providecommand\csname #1##1\endcsname{{#2{##1}}}%
    }
    \setcounter{alphacmdcounter}{0}
    \loop
        \stepcounter{alphacmdcounter}
        \edef\alphacmd@ID{\@alph\c@alphacmdcounter}
        \expandafter\alphacmd@factory\alphacmd@ID
    \ifnum\thealphacmdcounter<26
    \repeat
}
\let\cite\citep
\theoremstyle{plain}
\newtheorem{theorem}{Theorem}[section]
\newtheorem{lemma}[theorem]{Lemma}
\newtheorem{corollary}[theorem]{Corollary}
\theoremstyle{definition}
\newtheorem{definition}[theorem]{Definition}
\theoremstyle{remark}
\title{The Optimal Sample Complexity of Linear Contracts}
\author{Mikael M{\o}ller H{\o}gsgaard\\
Department of Statistics, University of Oxford, United Kingdom\\
Department of Computer Science, Aarhus University, Denmark\\
\texttt{hogsgaard@cs.au.dk}}
\date{}
\begin{document}

\maketitle

\begin{abstract}
In this paper, we settle the problem of learning optimal linear contracts from data in the offline setting, where agent types are drawn from an unknown distribution and the principal's goal is to design a contract that maximizes her expected utility. Specifically, our analysis shows that the simple Empirical Utility Maximization (EUM) algorithm yields an $\varepsilon$-approximation of the optimal linear contract with probability at least $1-\delta$, using just $O(\ln(1/\delta) / \varepsilon^2)$ samples. This result improves upon previously known bounds and matches a lower bound from \cite{duetting2025pseudodimensioncontracts} up to constant factors, thereby proving its optimality.
Furthermore, our result establishes the stronger guarantee of uniform convergence: the empirical utility of every linear contract is an $\varepsilon$-approximation of its true expectation with probability at least $1-\delta$, using the same optimal $O(\ln(1/\delta) / \varepsilon^2)$ sample complexity.

\end{abstract}

\section{Introduction}
A central problem in algorithmic contract theory is to design incentives for agents whose characteristics are unknown and must be learned from data. Consider a digital music platform looking to introduce a new royalty model (contract). Each independent musician (agent) on the platform has a private type, reflecting their creative process and cost of effort, drawn from a population-level distribution that is unknown to the platform. Before implementing a site-wide change of royalty model, the platform runs a pilot program with a small sample of musicians. In this program, it tests several new revenue-sharing contracts and gathers detailed data on their resulting song downloads and streaming engagement. Based on this sample, the platform aims to learn an improved royalty model that optimizes its profits by motivating its entire community of artists.

This ``pilot study'' is an example of the scenario formalized in the recent seminal work of \cite{duetting2025pseudodimensioncontracts}, which establishes a sample-based learning framework for designing an optimal contract from a finite dataset of fully-profiled agents. This framework complements other established models in the literature, each suited for different scenarios. For instance, the Bayesian setting models situations where the principal has full distributional knowledge, ideal for full-information and static scenarios. In contrast, online learning models address dynamic settings where a contract must be adapted through repeated, real-time interactions with agents. The framework of \cite{duetting2025pseudodimensioncontracts} thus captures yet another important real world scenario, the finite sample setting.

More formally, \cite{duetting2025pseudodimensioncontracts} consider the following framework, which we adopt (almost) and now formally define.
The environment is fixed by a set of $n$ actions an agent can take, indexed by $[n]=\{1, \dots, n\}$, and $m \ge 2$ possible outcomes, indexed by $[m]=\{1, \dots, m\}$. For each outcome $j \in [m]$, the principal receives a known (both to the principal and the agent), fixed reward $r_j \ge 0$. It is assumed that $r_1=0$ and there is at least one outcome with a positive reward.
An agent is characterized by a private type $\theta=(f,c)$ (i.e., unknown to the principal during live interaction), which consists of two components:
\begin{itemize}[noitemsep,topsep=0pt]
    \item A production function $f=(f_{1},\ldots,f_{n})$, where each $f_i$ is a probability distribution over the $m$ outcomes. Specifically, $f_{i,j}$ is the probability of observing outcome $j$ if the agent chooses action $i$.
    \item A cost vector $c=(c_{1},\ldots,c_{n})$, where $c_i\geq 0$ is the personal cost for the agent to take action $i$. We assume that action $1$ is an outside option with zero cost, i.e., $c_1=0$.
\end{itemize}
The principal designs a contract, which is a payment vector $t=(t_{1},\ldots,t_{m})$ where $t_j \ge 0$. If outcome $j$ occurs, the agent is paid $t_j$.
Given a contract $t$, an agent of type $\theta$ will choose an action $i \in [n]$ to maximize their own expected utility:
\begin{equation}\label{eq:agent_utility}
 u_{a}(\theta, t, i) = \textstyle\sum_{j=1}^{m}f_{i,j}t_{j} - c_{i}
\end{equation}
The principal's utility depends on which action the agent takes. Assuming the agent breaks ties in the principal's favor, the agent chooses the action $i^{*}(\theta, t)$ that maximizes the principal's utility from the set of the agent's own best actions (those maximizing \cref{eq:agent_utility}).\footnote{This principal-favoring tie-breaking convention seems to be standard in the algorithmic contract theory literature; see, e.g., \cite{dutting2019simple,xiao2020optimalcommoncontract,CastiglioniM021,DuettingFTC24,duetting2025pseudodimensioncontracts}.} The principal's utility for a given type $\theta$ is then:
\begin{equation*}
 u_p(\theta, t) = \textstyle\sum_{j=1}^{m}f_{i^{*}(\theta, t), j}(r_{j} - t_{j})
\end{equation*}
Finally, we define the learning objective. The principal's goal is to find a contract $t$ that maximizes the expected utility $u_p(\mathcal{D}, t) = \mathbb{E}_{\theta \sim \mathcal{D}}[u_p(\theta, t)]$ over an unknown distribution of agent types $\mathcal{D}$. The learning model of \cite{duetting2025pseudodimensioncontracts} assumes the principal has access to a dataset $\rS=\{\theta_{1},\ldots,\theta_{s}\}$ of $s$ i.i.d. samples from $\mathcal{D}$, and for each sample $\theta_i \in \rS$, the principal is given the full type (i.e., the production function $f^{(i)}$ and cost vector $c^{(i)}$), which allows the principal to simulate the agent's behavior and compute $u_p(\theta_i, t)$ for any candidate contract $t$.

As described, the basic framework of \cite{duetting2025pseudodimensioncontracts} assumes the principal receives samples of full agent types. We will, however, make a slightly weaker assumption, namely that the principal only has oracle access to compute the empirical utility $u_p(\rS,t)=\frac{1}{s}\sum_{i=1}^{s}u_p(\theta_i, t)$ for any candidate contract $t$, but is not given the specific type of the sampled agent nor their set of actions. This assumption is weaker than the basic assumption made in \cite{duetting2025pseudodimensioncontracts} and still captures the offline setting, where the principal first gathers information to compute $u_p(\rS,t)$ for any $t$ and then does not interact with the agents again. To the best of our knowledge, some of the results from \cite{duetting2025pseudodimensioncontracts} also hold in this weaker setting; we will comment on this when in order.
Furthermore, following \cite{duetting2025pseudodimensioncontracts}, we assume oracle access to $u_p(\rS,t)$ in order to isolate the statistical question studied here: how many samples are needed for empirical utility to generalize to expected utility. This abstracts away from the computational problem of evaluating or optimizing empirical utilities, which is an important line of work in its own right \cite{babaioff2006combinatorial,dutting2021complexity,Dtting2021CombinatorialC,duetting2022multi,ezra2023Inapproximability,dutting2024combinatorial,deo2024supermodular,DEFK24,DuettingFTC24,DBLP:journals/corr/abs-2403-09794}, but is not the focus of this paper.

Within this framework, \cite{duetting2025pseudodimensioncontracts} established a link between the sample complexity of learning a contract class and its pseudo-dimension, a combinatorial complexity measure. While their work provides general tools for analysis, the precise sample complexity remained unsolved for one of the most fundamental classes of contracts: linear contracts. These contracts formalize one of the most intuitive and simple ways to align incentives, namely the principal and agent share the realized reward according to a fixed percentage, as in music royalties, franchise royalties, and sales commissions. Despite linear contracts' simplicity, which makes them appealing from a practical standpoint, they are also known to exhibit robustness to unknown agent actions \cite{carroll2015robustness} and to be able, under certain conditions, to approximate the performance of fully optimal, yet more complex, contracts \cite{dutting2019simple}.

In this paper, we precisely characterize the sample complexity for learning an $ \eps $-approximation of the optimal linear contract.
Specifically, we show that the simple Empirical Utility Maximization (EUM) algorithm, choosing a contract within the linear contracts that maximizes the empirical utility, yields an optimal contract up to an additive $ \eps $-error with probability $ 1-\delta $ given $O( \ln{(1/\delta )}/\eps^{2} )$ samples, which is tight up to constant factors due to a lower bound of \cite{duetting2025pseudodimensioncontracts}. Furthermore, we show the same optimal sample complexity bound for the harder problem of learning the class of linear contracts uniformly, that is, ensuring the empirical and expected utilities are simultaneously $\eps$-close for all linear contracts.

Our tighter bound comes from a more direct analytical path leveraging key properties of linear contracts. While the general theory of \cite{duetting2025pseudodimensioncontracts} relies on the combinatorial abstraction of pseudo-dimension (See \cref{def:pseudodimension}), our proof uses a "first-principles" chaining argument. The key technical insight is to exploit the inherent monotonic structure of the expected reward of linear contracts. This property allows for the construction of a fine-grained net over the contract space, enabling a chaining argument that yields the optimal sample complexity. This approach handles the discontinuities and non-monotonicity of the utility function. In doing so, we demonstrate how exploiting the specific structure of a contract class can lead to optimal results where the general-purpose tools of previous work did not.

To describe the results, we define the class of linear contracts as $\cC_{\textit{linear}}= \{\alpha r\mid \alpha\in [0,1] \}  $ for a fixed $ r\in[0,1]^{m} $, where we write $ \alpha $ as shorthand for a contract in $ \cC_{\textit{linear}} $, and we will also interchange between $ \cC_{\textit{linear}}$ and $[0,1] $.
Formally, we show the following theorem, which is the main result of this paper.
\begin{theorem}[Main Result]\label{thm:main}
    Let $\mathcal{D}$ be an unknown distribution over agent types, $ r\in[0,1]^{m} $ be a reward vector, and let $\eps>0$ and $\delta\in(0,1)$ be given. Then, for $  s\geq3456 \ln{(4/\delta )}/\eps^{2}$, with probability at least $1-\delta$ over $ \rS\sim \cD^{s} $, it holds for any $ \alpha\in\cC_{\textit{linear}} $:
    \begin{equation*}
        |u_p(\mathcal{D}, \alpha) - u_p(\rS, \alpha)| \leq  \eps.
    \end{equation*}
\end{theorem}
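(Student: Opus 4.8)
The plan is to reduce the statement to a uniform law of large numbers for a family of \emph{monotone} functions and then control that family by chaining. First I would rewrite the principal's per-type utility under a linear contract. For a fixed type $\theta=(f,c)$ the agent facing $\alpha$ solves $\max_i\,(\alpha R_i(\theta)-c_i)$, where $R_i(\theta)=\sum_j f_{i,j}r_j\in[0,1]$ is the expected reward of action $i$; writing $g(\theta,\alpha):=R_{i^{*}(\theta,\alpha)}(\theta)$ for the reward of the selected action, the principal's utility becomes $u_p(\theta,\alpha)=(1-\alpha)\,g(\theta,\alpha)$. A one-line exchange argument shows $g(\theta,\cdot)$ is non-decreasing: if $i$ is a best response at $\alpha$ and $i'$ at $\alpha'>\alpha$, adding the two optimality inequalities gives $(\alpha'-\alpha)(R_{i'}-R_i)\ge 0$, so $R_{i'}\ge R_i$ for \emph{any} selection of best responses (in particular the principal-favorable one). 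Since $g(\theta,\alpha)\in[0,1]$ and $|1-\alpha|\le 1$, setting $G(\alpha):=\e_{\theta\sim\cD}[g(\theta,\alpha)]$ and $\hat G(\alpha):=\tfrac1s\sum_i g(\theta_i,\alpha)$ yields $|U_p(\cD,\alpha)-U_p(\rS,\alpha)|=(1-\alpha)\,|G(\alpha)-\hat G(\alpha)|\le|G(\alpha)-\hat G(\alpha)|$. Thus it suffices to bound $\sup_\alpha|\hat G(\alpha)-G(\alpha)|$, where every $g(\theta,\cdot)$ is $[0,1]$-valued and monotone. This is exactly where the non-monotone, discontinuous utility is tamed: the troublesome factor $(1-\alpha)$ is pulled out, leaving a monotone reward process.

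Next I would exploit monotonicity to build a fine net. For $\alpha\le\alpha'$, monotonicity and boundedness give $\e_\theta[(g(\theta,\alpha')-g(\theta,\alpha))^2]\le\e_\theta[g(\theta,\alpha')-g(\theta,\alpha)]=G(\alpha')-G(\alpha)$, and the identical identity holds for the empirical measure with $\hat G$. Hence both the population and the empirical $L^2$ pseudo-metrics are dominated by the increments of a monotone function valued in $[0,1]$, whose total variation is at most $1$. Choosing breakpoints $0=\alpha_0<\cdots<\alpha_N=1$ so that $G$ (resp. $\hat G$) increases by at most $\eta^2$ across each consecutive pair produces an $\eta$-net of size $N(\eta)\le 1+\eta^{-2}$, so $\sqrt{\log N(\eta)}=O\!\big(\sqrt{\log(1/\eta)}\big)$.

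I would then run Dudley's chaining. After symmetrization, conditional on the sample the Rademacher process $\alpha\mapsto\tfrac1s\sum_i\sigma_i g(\theta_i,\alpha)$ has sub-Gaussian increments for the empirical $L^2$ metric, so the entropy integral bounds $\e\big[\sup_\alpha|\hat G(\alpha)-G(\alpha)|\big]$ by $\tfrac{c}{\sqrt s}\int_0^1\sqrt{\log(1+\eta^{-2})}\,d\eta$. The crucial point is that this integral converges to an absolute constant (it is $O(1)$, comparable to $\int_0^1\sqrt{\log(1/\eta)}\,d\eta=\sqrt\pi/2$), so \emph{no} spurious $\log(1/\eps)$ factor survives and $\e\big[\sup_\alpha|\hat G-G|\big]\le c/\sqrt s$. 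Finally, since replacing one sample alters every $\hat G(\alpha)$, hence $\sup_\alpha|\hat G-G|$, by at most $1/s$, McDiarmid's inequality gives $\sup_\alpha|\hat G(\alpha)-G(\alpha)|\le c/\sqrt s+\sqrt{\ln(1/\delta)/(2s)}$ with probability $1-\delta$. Taking $s\ge 3456\,\ln(4/\delta)/\eps^2$ forces the right-hand side below $\eps$ (here $\ln(4/\delta)\ge\ln 4>1$ absorbs the constant $c$ into the leading term), which with the reduction of the first paragraph proves the claim.

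I expect the main obstacle to be the entropy/chaining step: one must show the $L^2$-covering numbers are $O(\eta^{-2})$ \emph{uniformly over the sample} (so the conditional Dudley bound is deterministic) and verify that the resulting entropy integral converges, since this convergence—rather than a crude single-scale union bound—is precisely what eliminates the extra logarithmic factor and delivers the optimal $\ln(1/\delta)/\eps^2$ rate. Pinning down the explicit constant $3456$ is then routine bookkeeping over the symmetrization factor, the numerical value of the entropy integral, and the McDiarmid term.
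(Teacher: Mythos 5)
Your proposal is correct, and it rests on the same probabilistic skeleton as the paper's proof --- McDiarmid's inequality, symmetrization, Dudley chaining against an $O(1/\eta^{2})$ empirical $L_{2}$ cover, all powered by the monotone-reward property (\cref{lem:empirical_reward_non_decreasing}), which you re-derive by the identical exchange argument --- but your reduction at the covering step is genuinely different and cleaner. The paper covers the utility class $\{u_{p}(\cdot,\alpha)\}_{\alpha\in[0,1]}$ directly: its net in \cref{lem:linear_contracts_cover} is the union of an x-axis grid $\cC_{1}$ and a y-axis pullback net $\cC_{2}$, because after the triangle inequality the error splits into a reward-increment term and a residual term requiring $|\alpha-\widehat{\alpha}|\leq \nu$. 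You instead observe that $U_{p}(\cD,\alpha)-U_{p}(\rS,\alpha)=(1-\alpha)\bigl(G(\alpha)-\hat G(\alpha)\bigr)$ --- the factor $(1-\alpha)$ is common to the population and empirical utilities and factors out of the \emph{difference} --- so uniform convergence of the monotone, $[0,1]$-valued reward class $\{g(\cdot,\alpha)\}$ suffices. This eliminates the x-grid and the triangle-inequality bookkeeping entirely: your net needs only one representative per level band of $\hat G$, and the paper's $L_{1}$-to-$L_{2}$ conversion (which costs a squaring of the precision) becomes the one-line inequality $(\text{increment})^{2}\leq \text{increment}$ valid for increments in $[0,1]$. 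What each route buys: yours is shorter and yields a slightly smaller cover (roughly $1+\eta^{-2}$ versus $12/\nu^{2}$), hence marginally better constants; the paper's argument is self-contained at the level of the utility class and does not rely on the deviation factoring exactly, so it degrades more gracefully if the common factor were perturbed. One imprecision you should repair: ``breakpoints $0=\alpha_{0}<\cdots<\alpha_{N}=1$ so that $\hat G$ increases by at most $\eta^{2}$ across each consecutive pair'' need not exist, since $\hat G$ is a monotone step function that can jump by more than $\eta^{2}$ at a single point. The correct construction is exactly the paper's pullback: partition the range $[0,1]$ into bands of height $\eta^{2}$ and pick one contract from each nonempty preimage; any $\alpha$ then shares a band with its representative $\widehat{\alpha}$, giving $|\hat G(\alpha)-\hat G(\widehat{\alpha})|\leq\eta^{2}$ and hence empirical $L_{2}$ distance at most $\eta$ by per-sample monotonicity, with the same $O(1/\eta^{2})$ count --- jumps cost nothing because a band skipped by a jump has empty preimage and is simply omitted. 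With that fix, and the constant-chasing you defer (which mirrors the paper's computation ending in $s\geq(2\cdot 12\sqrt{6})^{2}\ln(4/\delta)/\eps^{2}=3456\ln(4/\delta)/\eps^{2}$), your argument is a complete and valid proof of \cref{thm:main}.
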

We emphasize that, despite the one-dimensional parametrization of linear contracts by $\alpha$, the utility function $u_p(\theta,\alpha)$ need not be continuous in $\alpha$: small changes in $\alpha$ may change the agent's optimal action discontinuously see e.g. \Cref{fig:utility_alpha_example}.

We note that the main theorem gives a uniform convergence bound for learning the difference between the empirical and expected utility for the class of linear contracts, with sample complexity independent of the number of actions $ n $ and the number of outcomes $ m $. Actually the sample complexity is assympotically equivalent to the bound for any single fixed contract. This is desirable, as we assume that the principal does not know the number of actions $ n $, and the number of actions and outcomes $ m $ could be large.\footnote{We chose not to pursue the results for action spaces of infinite size to keep the presentation simple, as that case requires more technical scrutiny. For instance, the optimal action of the agent has to be defined differently, as the maximum may not be attained by an action. We, however, believe that the argument is not inherent to the finite action space case.}

Furthermore, the uniformity of the bound allows the principal not only to learn the utility of the optimal contract up to an additive $ \eps $ factor, but also to compare the utility of any two contracts and assess which is better, up to $ \eps $ precision.

It is also worth noting that the bound, up to constants, is the same as if one wanted to guarantee that the empirical utility of a single contract is $ \eps $-close to the expected utility of that contract.
Thus, guaranteeing that the empirical utility of any (or one) contract is $ \eps $-close to its expected utility requires, up to constant factors, the same number of samples.

As a corollary of \Cref{thm:main}, it follows that the simple EUM algorithm achieves the optimal sample complexity.

\begin{corollary}[EUM Optimal Sample Complexity]\label{cor:erm}
        Let $\mathcal{D}$ be an unknown distribution over agent types, $ r\in[0,1]^{n} $ be a reward, and let $\eps>0$ and $\delta\in(0,1)$ be given. Then, for $  s\geq  6912 \ln{(4/\delta )}/\eps^{2}$  with probability at least $1-\delta$ over $ \rS\sim \cD^{s} $, it holds that \cref{alg:erm_linear}[Empirical Utility Maximization algorithm] returns a contract $ \hat{\alpha} \in \cC_{\textit{linear}} $ such that:
    \begin{equation*}
        u_p(\cD,\hat{\alpha})\geq \textstyle\sup_{\alpha\in\cC_{\textit{linear}}}u_p(\cD,\alpha) - \eps.
    \end{equation*}
        and $ \hat{\alpha} $ is found by asking $ O(1/\eps) $ queries to the oracle for $ u_p(\rS,\cdot) $.
\end{corollary}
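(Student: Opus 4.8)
The plan is to derive the corollary from the uniform convergence guarantee of \cref{thm:main} together with a discretization step that reduces the continuum $\cC_{\textit{linear}}=[0,1]$ to a grid of $O(1/\eps)$ points. Concretely, the EUM algorithm evaluates the empirical utility $u_p(\rS,\cdot)$ at the uniform grid $G=\{0,\eta,2\eta,\ldots,1\}$ with spacing $\eta=\Theta(\eps)$ and returns $\hat\alpha=\argmax_{\alpha\in G}u_p(\rS,\alpha)$; since $|G|=O(1/\eta)=O(1/\eps)$, this accounts for the claimed number of oracle queries. It then remains to show that this grid maximizer is $\eps$-optimal for the true objective $U_p(\cD,\cdot)$.

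First I would record the structural form of the utility and prove the key discretization lemma. Writing $t_j=\alpha r_j$, the per-type utility is $u_p(\theta,\alpha)=\sum_j f_{i^{*}(\theta,\alpha),j}(r_j-\alpha r_j)=(1-\alpha)R(\theta,\alpha)$ with $R(\theta,\alpha)=\sum_j f_{i^{*}(\theta,\alpha),j}r_j\in[0,1]$. Averaging, both $U_p(\cD,\alpha)$ and $u_p(\rS,\alpha)$ have the form $(1-\alpha)R(\alpha)$ where $R$ is non-decreasing and bounded in $[0,1]$, using the monotonicity of the expected reward (which holds per type, hence for the average). The crux is: if $g(\alpha)=(1-\alpha)R(\alpha)$ with $R:[0,1]\to[0,1]$ non-decreasing, then $\max_{\alpha\in G}g(\alpha)\ge\sup_{\alpha\in[0,1]}g(\alpha)-\eta$. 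To see this, let $\alpha^{\star}$ (nearly) attain the supremum and let $\beta=\min\{a\in G:a\ge\alpha^{\star}\}$, so $0\le\beta-\alpha^{\star}\le\eta$. Since $R$ is non-decreasing, $R(\beta)\ge R(\alpha^{\star})$, whence $g(\beta)=(1-\beta)R(\beta)\ge(1-\beta)R(\alpha^{\star})=g(\alpha^{\star})-(\beta-\alpha^{\star})R(\alpha^{\star})\ge g(\alpha^{\star})-\eta$. Applied to $U_p(\cD,\cdot)$, this produces a grid point $\beta$ with $U_p(\cD,\beta)\ge\sup_{\alpha}U_p(\cD,\alpha)-\eta$.

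Finally I would chain the inequalities on the high-probability event of \cref{thm:main}, where empirical and true utilities agree uniformly up to the chosen scale $E$. Combining uniform convergence with the grid-optimality $u_p(\rS,\hat\alpha)\ge u_p(\rS,\beta)$ and the discretization lemma gives $U_p(\cD,\hat\alpha)\ge u_p(\rS,\hat\alpha)-E\ge u_p(\rS,\beta)-E\ge U_p(\cD,\beta)-2E\ge\sup_{\alpha}U_p(\cD,\alpha)-\eta-2E$. Choosing the grid spacing $\eta$ and the uniform-convergence scale $E$ as suitable constant fractions of $\eps$ makes the right-hand side $\sup_{\alpha}U_p(\cD,\alpha)-\eps$, with a sample requirement that is a constant multiple of $\ln(4/\delta)/\eps^{2}$; optimizing the split between the two error budgets yields the stated constant.

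The main obstacle is the discretization lemma, not the (routine) ERM-style chaining. A uniform grid would ordinarily fail here because $U_p(\cD,\cdot)$ and $u_p(\rS,\cdot)$ are non-monotone and discontinuous: as $\alpha$ crosses a threshold the best response $i^{*}(\theta,\alpha)$ jumps, so the reward factor $R$ can increase by an $\Omega(1)$ amount within a single cell, and evaluating \emph{below} $\alpha^{\star}$ could be arbitrarily bad. Monotonicity of the expected reward is exactly what neutralizes this: by evaluating at the grid point $\beta$ just \emph{above} $\alpha^{\star}$, the reward has only increased while the factor $(1-\beta)$ has dropped by at most $\eta$, so the sole harmful direction is controlled. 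This is the same structural property underlying the chaining argument behind \cref{thm:main}.
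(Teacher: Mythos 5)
Your proposal is correct and takes essentially the same route as the paper: evaluate $u_p(\rS,\cdot)$ on a uniform grid of $O(1/\eps)$ points, use the monotonicity of the expected reward (the paper's \cref{lem:empirical_reward_non_decreasing}, inherited by the average $R(\cD,\cdot)$) to show the grid point just to the \emph{right} of a near-optimal $\alpha^{\star}$ loses at most the grid spacing via $(1-\beta)R(\beta)\geq(1-\beta)R(\alpha^{\star})\geq g(\alpha^{\star})-\eta$, and then chain through the uniform-convergence event of \cref{thm:main}. Your discretization lemma is exactly the paper's calculation (its version of Lemma 4.3 of \cite{duetting2025pseudodimensioncontracts}), so the two arguments differ only in constant bookkeeping.
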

To the best of our knowledge, \Cref{thm:main} and \Cref{cor:erm} are the first results in the statistical setting introduced by \cite{duetting2025pseudodimensioncontracts} to show optimal sample complexity for a non-trivial class of contracts, both for uniform convergence and for learning the optimal contract up to an $ \eps $ error. Thus, we view these results as a step towards understanding the optimal sample complexity of learning other contracts in the setting of \cite{duetting2025pseudodimensioncontracts}.
We remark that we did not attempt to optimize the constants in the bound of \cref{thm:main} and \cref{cor:erm}.

\subsection{Related Work}

The study of contracts has a rich history in economics, with seminal contributions from \cite{holmstrom1979} and \cite{grossman1983}. The importance of the field, as well as the foundational work of Oliver Hart and Bengt Holmström, was highlighted when the Nobel Prize in Economics in 2016 was awarded to Oliver Hart and Bengt Holmström for their work on contract theory \cite{Nobel2016}.

Although contract design has its roots in economics, it has also garnered significant interest at the intersection of economics and computer science, particularly with the emergence of algorithmic contract design.
The study of algorithmic contracts encompasses several distinct settings and aspects (some of which include, but are not limited to): The computational facets of contract design \cite{babaioff2006combinatorial,dutting2021complexity,Dtting2021CombinatorialC,duetting2022multi,ezra2023Inapproximability,dutting2024combinatorial,deo2024supermodular,DEFK24,DuettingFTC24,DBLP:journals/corr/abs-2403-09794}; the Bayesian setting, where the distribution of agent types is known \cite{GSW21,alon2021contracts,castiglioni2022designing,AlonDLT23,GuruganeshSW023,CastiglioniCLXS25}; and the online setting, where the principal interacts sequentially with agents, receiving only bandit feedback, and must design contracts on the fly \cite{ho2014adaptive,cohen2022,DuettingGuruganeshSchneiderWang23,ZhuBYWJJ23,bcmg23,ChenEtAl2024,scheid24incentivizedLearning,burkett2024}. This paper focuses on the offline setting introduced by \cite{duetting2025pseudodimensioncontracts}, and we refer the reader to their work for a more comprehensive comparison of this setting with other paradigms. Furthermore the statistical setting of \cite{duetting2025pseudodimensioncontracts} is related to a line of work in the intersection of game theory and learning theory, which studies the sample complexity in games and mechanism design, see e.g. \cite{ DBLP:conf/stoc/ColeR14, DBLP:conf/nips/MorgensternR15, DBLP:conf/sigecom/BalcanSV18,DBLP:conf/stoc/BalcanDDKSV21, DBLP:journals/corr/abs-2604-26922}. In this work we also uses tools from learning theory, such as uniform convergence, chaining arguments, empirical risk minimization/utility maximization, see e.g. \cite{books/daglib/0033642,Wainwright_2019}.

\subsection{Comparison to Previous Work}
In the offline setting, \cite{duetting2025pseudodimensioncontracts} shows two upper bounds on the sample complexity of learning the best linear contract: Theorem 4.1 (combined with Theorem 3.7) and Theorem 5.4. These theorems show that either $ O((\ln{(1/\eps )}+\ln{(1/\delta )})/\eps^{2})$ or $ O((\ln{(n )}+\ln{(1/\delta )})/\eps^{2})$ samples are sufficient to learn the best linear contract up to an additive error of $ \eps $ with probability at least $ 1-\delta $.

Some comments are in order regarding these two bounds. Both bounds are proven by upper-bounding the pseudo-dimension $ d $ of a class of contracts and then applying Theorem 3.7 in \cite{duetting2025pseudodimensioncontracts}, which, given such a bound, gives a sample complexity of $ O((d+\ln{(1/\delta )})/\eps^{2}) $.

In the first case, the bound on the pseudo-dimension is not on the space of linear contracts itself but is instead on a discretization of the contract space consisting of multiples of $ \eps $, where this discretization preserves a good approximation of the best contract. This gives a bound on the size of the discretization of $ O(1/\eps) $, whereby the pseudo-dimension of this discretization can be bounded by the logarithm of the number of contracts in the discretization, i.e., $O( \ln{(1/\eps )} )$. The result then follows from their general framework, which only requires a bound on the pseudo-dimension of the contract class, and running the EUM algorithm on the discretization.
Thus, the first bound does not provide a bound on the sample complexity of learning the class of linear contracts uniformly (all contracts simultaneously), as \cref{thm:main} does. Instead, it provides a bound on the sample complexity of learning a discretization of the class of linear contracts that preserves the optimal contract up to an additive error of $ \eps $, which is sufficient for the EUM algorithm to learn the optimal contract up to an additive error of $ \eps $ with probability at least $ 1-\delta $. Furthermore, to the best of our knowledge this sample complexity bound combined with the EUM algorithm do as \cref{alg:erm_linear}, only need an oracle for the empirical utility over $ \rS,$ and not the full information of the agents types.

The second bound is a bound on the pseudo-dimension of the class of linear contracts, which they show can be upper bounded by  $O(\ln{(n )}) $, this is done by relating the number of critical values of a linear contract (where the principal's expected reward changes), which is at most $ n $, to the pseudo-dimension of the class of linear contracts.

This bound recovers the full generality of \cref{thm:main}, but with the drawback of its sample complexity being dependent on the number of actions, $ n $, which could be large and is assumed by the setting we consider to be unknown to the principal, thus, to the best of our knowledge to leverage the sample complexity bound of $ O((\ln{(n )}+\ln{(1/\delta )})/\eps^{2}) $, one would need to have knowledge of $ n $ which is for instance provided in the basic setting of full information on the sample of \cite{duetting2025pseudodimensioncontracts}.

Thus, in the regime of large $ n >1/\eps$, there remained a gap between the best known sample complexity for learning the class of linear contracts uniformly and that of learning the best contract, which we close with the results, \cref{thm:main} and \cref{cor:erm}.
Furthermore, Theorem 5.9 of \cite{duetting2025pseudodimensioncontracts}, which shows a lower bound of $ \Omega(\ln{(1/\delta )}/\eps^{2}) $ on the sample complexity of learning the optimal linear contract up to $ \eps $-error, when combined with the results in \cref{thm:main} and \cref{cor:erm}, witnesses the tightness of all these results up to constant factors and shows that there is no gap in the sample complexity between learning the class of linear contracts uniformly and learning the best contract.

It is worth noting that the lower bound of \cite{duetting2025pseudodimensioncontracts} is for a simple distribution over two agent types with two actions, $ n=2 $. Thus, the lower bound (and the previous uniform upper bound) did not rule out the possibility of the sample complexity being dependent on the number of actions, which we show is not the case.

In the remainder of the paper, we show how to prove the main results, \cref{thm:main} and \cref{cor:erm}. The proof uses a key structural property of the class of linear contracts, having a non-decreasing reward, highlighting how using specific properties of a contract space can be leveraged to obtain optimal sample complexity bounds. We hope this can lead to more optimal bounds in the area of contract design.

\section{Proof Overview and Formal Proofs}
In this section, we prove our main result \cref{thm:main} and its \cref{cor:erm}. We begin the section with a high-level overview of the proof of \cref{thm:main}, and then show how it naturally implies \cref{cor:erm}, and finally, provide the full proof of \cref{thm:main}.

The high-level proof idea of \cref{thm:main} is as follows: We first upper bound the difference between the empirical utility and the expected utility for all linear contracts, $ \sup_{\alpha\in[0,1]}|u_p(\rS,\alpha)-u_p(\cD,\alpha)| $, by the Rademacher complexity, $\e_{\rS\sim \cD^{s},\sigma\sim\{  -1,1\}^{s} }[\sup_{\alpha\in [0,1]} \sum_{i=1}^{s}\sigma_{i}u_p(\theta_{i},\alpha)/s] $,  of the class of linear contracts plus $ \eps,$ by McDiarmid's inequality as done in \cite{RademacherMendelson03}. This bound holds with probability at least $ 1-\delta $ over $ \rS=(\theta_{1},\ldots,\theta_{s})\sim \cD^{s} $,  for $ s=\Omega(\ln{(1/\delta )}/\eps^{2}).$

The upper bound on the generalization error in terms of the Rademacher complexity of linear contracts provides an intuitive explanation of the generalization property of linear contracts, as the Rademacher complexity measures how prone the linear contracts are to fitting random noise, i.e., if linear contracts could overfit to the data, leading to the empirical utility and expected utility being far from each other. However, as the argument shows, the linear contracts have a low Rademacher complexity, thus the empirical utility and expected utility are close to each other.

Having upper bounded the generalization error of linear contracts by its Rademacher complexity, we use a chaining result from \cite{patrick}[Proposition 5.3], to upper bound the Rademacher complexity of linear contracts in terms of their covering number, by the following relation:
\begin{align}\label{eq:chaining}
    &\e_{\sigma\sim\{  -1,1\}^{s} }\bigl[\textstyle\sup_{\alpha\in [0,1]} \sum_{i=1}^{s}\sigma_{i}u_p(\theta_{i},\alpha)/s\bigr]
    \\
    &\leq
    \inf_{\eta\in[0,1/2]}\bigl\{  4\eta +\frac{12}{\sqrt{s}}\textstyle\int_{\eta}^{1/2}\sqrt{\ln{N(\cC_{\textit{linear}}, || \cdot ||_{2,\rS},\nu ) }} d\nu \bigr\}.\nonumber
\end{align}
Here, $ N(\cC_{\textit{linear}}, || \cdot ||_{2,\rS},\nu )$ is the covering number of linear contracts on the set of agents $ \rS $ with respect to the $ L_{2} $ norm and precision $ \nu.$ This is the smallest integer such that there exists a set of linear contracts $ \cC_{\nu} $ of size $ N(\cC_{\textit{linear}},|| \cdot ||_{2,\rS},\nu )$, which satisfies that for any linear contract $ \alpha\in \cC_{\textit{linear}} $, there exists $ \widehat{\alpha}\in \cC_{\nu} $ such that
\begin{align*}
    \sqrt{\textstyle\sum_{i=1}^{s} (u_{p}(\theta_{i},\alpha)-u_{p}(\theta_{i},\widehat{\alpha}))^{2}/s} \leq \nu.
\end{align*}
 We then show that if one can find a cover of size $ O((1/\nu)^{c}) $ for some constant $ c>0,$  \cref{eq:chaining} reduces to $O(\inf_{\eta\in[0,1/2] }\{ \eta +1/\sqrt{s} \}  )$ which is $ O(1/\sqrt{s}).$ Since we set $ s=\Omega(\ln{(1/\delta )}/\eps^{2}) $ we have that $ O(1/\sqrt{s}) = O(\eps)$, which implies that the generalization error is $ O(\eps) $ with probability at least $ 1-\delta. $

Thus, we have reduced the problem of bounding the generalization error of all linear contracts to bounding their covering number by $ O((1/\nu)^{c})  $. We now proceed to show how to find such a cover. We will first find an $ L_{1} $ cover of size $ O(1/\nu) $ which, as we will show later, can be converted to an $ L_{2} $ cover of size $ O((1/\nu)^{2}) $, which by the above argument is sufficient to obtain the desired bound on the generalization error of linear contracts.

Now, an intuitive first approach one could explore to find such a cover would be to discretize the interval $ [0,1] $ into $ O(1/\nu) $ points, being multiples of $ \nu $, which would work if the utility of linear contracts were linear in the parameter $ \alpha$. However, this is not the case, as the utility of linear contracts can have discontinuities and, in general, does not possess any monotonicity properties. See \cref{fig:utility_alpha_example} for an illustration. However, trying the above brings some insights that might be important to finding a small cover, namely if we let $ \alpha\in [0,1]$ and $ \widehat{\alpha} $ be the point in the discretization of the interval $ [0,1] $ that is closest to $ \alpha $, we have that
\begin{align}\label{eq:empirical_reward_diff}
    &\tfrac{1}{s}\textstyle\sum_{i=1}^{s}\negmedspace |u_{p}(\theta_i,\alpha)-u_{p}(\theta_i,\widehat{\alpha})|=
    \\
    &
    \tfrac{1}{s}\textstyle\sum_{i=1}^{s}\negmedspace
    \left|\textstyle\sum_{j=1}^{m}f_{i^{*}(\theta_{i}, \alpha), j}(1-\alpha)r_{j}-f_{i^{*}(\theta_{i}, \widehat{\alpha}), j}(1-\widehat{\alpha})r_{j}  \right|\nonumber
    \\
    &\leq
    \tfrac{(1-\alpha)}{s}
    \textstyle\sum_{i=1}^{s}\negmedspace\left|\textstyle\sum_{j=1}^{m}\left(f_{i^{*}(\theta_{i}, \alpha), j}r_{j}-f_{i^{*}(\theta_{i}, \widehat{\alpha}), j}r_{j}\right)  \right|\nonumber
\\
    &+\underbrace{\tfrac{1}{s}\textstyle\sum_{i=1}^{s}\negmedspace\left|\textstyle\sum_{j=1}^{m}f_{i^{*}(\theta_{i}, \widehat{\alpha}), j}\left(\alpha -\widehat{\alpha}\right)r_{j}\right|}_{\nu}\nonumber
\end{align}
where the inequality follows from adding and subtracting $ \alpha $ in the term $ (1-\widehat{\alpha}) $ and using the triangle inequality. Now, since $ |\alpha-\widehat{\alpha}|\leq \nu $, $ r_{j}\in[0,1] $, and $ f_{i^{*}(\theta_{i}, \widehat{\alpha}), j} $ is a probability distribution, it follows from yet another use of the triangle inequality that the last term in the above is at most $ \nu.$
Thus, we have a bound on the second term of $ \nu $, however we do not have control of the first term yet.
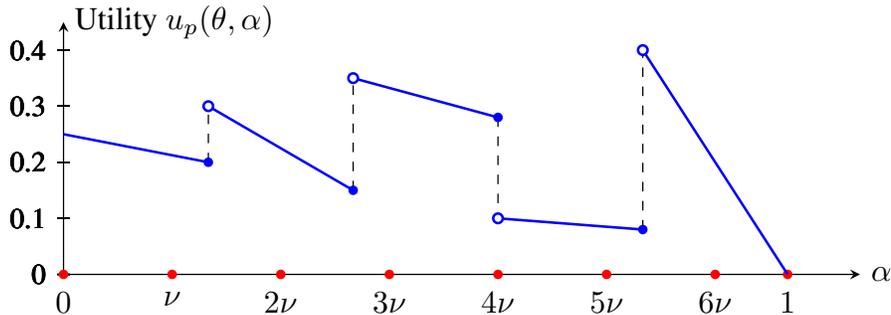
\begin{figure}[h]
    \centering
    \resizebox{0.65\linewidth}{!}{%
    \begin{tikzpicture}[x=8cm, y=6.2cm, >=stealth]
        \draw[->] (0,0) -- (1.1,0) node[right] {$\alpha$};
        \draw[->] (0,0) -- (0,0.45) node[above,right] {Utility $u_p(\theta, \alpha)$};

        \foreach \y in {0,0.1,0.2,0.3,0.4}
            \draw (2pt,\y) -- (-2pt,\y) node[left] {\small \y};

        \foreach \x in {0,0.15,0.30,0.45,0.6,0.75,0.9,1} {
            \fill[red] (\x, 0) circle (1.5pt);
        }
        \node[below] at (0,-2pt) {$0$};
        \node[below] at (0.15,-2pt) {$\nu$};
        \node[below] at (0.30,-2pt) {$2\nu$};
        \node[below] at (0.45,-2pt) {$3\nu$};
        \node[below] at (0.6,-2pt) {$4\nu$};
        \node[below] at (0.75,-2pt) {$5\nu$};
        \node[below] at (0.9,-2pt) {$6\nu$};
        \node[below] at (1,-2pt) {$1$};

        \draw[blue, thick] (0, 0.25) -- (0.2, 0.2);
        \draw[blue, thick] (0.2, 0.3) -- (0.4, 0.15);
        \draw[blue, thick] (0.4, 0.35) -- (0.6, 0.28);
        \draw[blue, thick] (0.6, 0.38) -- (0.8, 0.08);
        \draw[blue, thick] (0.8, 0.4) -- (1, 0);

        \draw[dashed] (0.2, 0.2) -- (0.2, 0.3);

        \draw[dashed] (0.4, 0.15) -- (0.4, 0.35);

        \draw[dashed] (0.6, 0.38) -- (0.6, 0.28);

        \draw[dashed] (0.8, 0.08) -- (0.8, 0.4);

    \end{tikzpicture}%
    }
    \caption{An example of the principal's utility $u_p(\theta, \alpha)$ as a function of the linear contract parameter $\alpha$. The utility can be non-monotonic and exhibit discontinuities. The red dots on the x-axis illustrate a simple discretization of the parameter space.}
    \label{fig:utility_alpha_example}
\end{figure}
Thus, we have to use a more refined approach. To this end, we use a result of \cite{Dtting2021CombinatorialC}, which shows that even though the empirical utility is not monotonic, the empirical reward $ r_{p}(\theta,\alpha):=\sum_{j=1}^{m} f_{i^{*}(\theta, \alpha), j}r_{j} $ of linear contracts is a non-decreasing function in the contract parameter $ \alpha\in[0,1].$ See \cref{fig:empirical_reward_alpha} for an illustration.
\begin{figure}[h]
    \centering
    \resizebox{0.65\linewidth}{!}{%
    \begin{tikzpicture}[x=8cm, y=2.8cm, >=stealth]
        \draw[->] (0,0) -- (1.1,0) node[right] {$\alpha$};
        \draw[->] (0,0) -- (0,1.1) node[above,right] {Empirical Reward $\sum_{i=1}^{s}r_{p}(\theta_{i},\alpha)/s$};
        \foreach \x in {0,0.2,0.4,0.6,0.8,1}
            \draw (\x,2pt) -- (\x,-2pt) node[below] {\small \x};
        \foreach \y in {0,0.2,0.4,0.6,0.8,1}
            \draw (2pt,\y) -- (-2pt,\y) node[left] {\small \y};

        \foreach \y in {0.25, 0.5, 0.75} {
            \draw[gray, dashed] (0, \y) -- (1.1, \y);
        }

        \fill[red!20, opacity=0.5] (0, 0) rectangle (0.2, 0.25);
        \draw[red, thick, |-|] (0, -0.2) -- (0.2, -0.2);
        \fill[red] (0.1,-0.2) circle (1.5pt) node[below left, black] {$x_{0}$};

        \fill[green!20, opacity=0.5] (0.2, 0.25) rectangle (0.5, 0.5);
        \draw[green!50!black, thick, |-|] (0.2, -0.2) -- (0.5, -0.2);
        \fill[green!50!black] (0.45,-0.2) circle (1.5pt) node[below, black] {$x_{1}$};

        \fill[orange!20, opacity=0.5] (0.5, 0.5) rectangle (0.8, 0.75);
        \draw[orange, thick, |-|] (0.5, -0.2) -- (0.8, -0.2);
        \fill[orange] (0.55,-0.2) circle (1.5pt) node[below, black] {$x_{2}$};

        \fill[purple!20, opacity=0.5] (0.8, 0.75) rectangle (1.0, 1.0);
        \draw[purple, thick, |-|] (0.8, -0.2) -- (1.0, -0.2);
        \fill[purple] (0.9,-0.2) circle (1.5pt) node[below, black] {$x_{3}$};

        \draw[blue, thick] (0, 0.1) -- (0.2, 0.1);
        \draw[blue, thick] (0.2, 0.3) -- (0.5, 0.3);
        \draw[blue, thick] (0.5, 0.6) -- (0.8, 0.6);
        \draw[blue, thick] (0.8, 0.9) -- (1, 0.9);

        \draw[dashed, blue] (0.2, 0.1) -- (0.2, 0.3);
        \draw[blue, fill=white, thick] (0.2, 0.1) circle (1.5pt);
        \fill[blue] (0.2, 0.3) circle (1.5pt);

        \draw[dashed, blue] (0.5, 0.3) -- (0.5, 0.6);
        \draw[blue, fill=white, thick] (0.5, 0.3) circle (1.5pt);
        \fill[blue] (0.5, 0.6) circle (1.5pt);

        \draw[dashed, blue] (0.8, 0.6) -- (0.8, 0.9);
        \draw[blue, fill=white, thick] (0.8, 0.6) circle (1.5pt);
        \fill[blue] (0.8, 0.9) circle (1.5pt);

    \end{tikzpicture}%
    }
    \caption{An example of the empirical reward $\sum_{i=1}^{s}r_{p}(\theta_{i},\alpha)/s$ as a function of the linear contract parameter $\alpha$. The y-axis is discretized into intervals (separated by dashed lines). The pullback of these intervals onto the x-axis is shown as colored bars, and a point added to $ \cC_{\nu} $ from each pullback interval is marked, where in this example $ x_{0},x_{1},x_{2},x_{3} $ would be added to the discretization.}
    \label{fig:empirical_reward_alpha}
\end{figure}
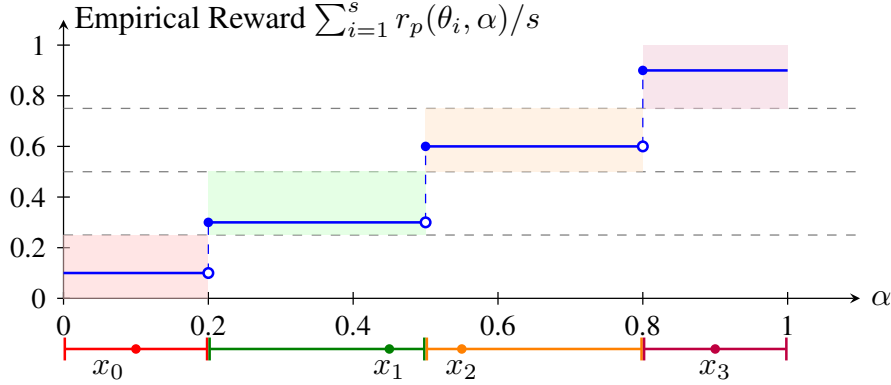
The insight is now that we can discretize the y-axis of the empirical reward $\frac{1}{s}\sum_{i=1}^{s} r_{p}(\theta_{i},\cdot) $ into intervals of length $ O(\nu) $, take the pullback of each of these intervals, and add a point from each of these pullbacks to the discretization $ \cC_{\nu} $. Furthermore, we also discretize the x-axis into a grid of $ O(1/\nu) $ equally spaced points and add these to $ \cC_{\nu}.$ Now, for any linear contract $ \alpha\in[0,1] $, we have that $ \frac{1}{s}\sum_{i=1}^{s} r_{p}(\theta_{i},\alpha) $ takes a value in one of the intervals of length $ \nu $, thus there exists a point $ \widehat{\alpha}\in \cC_{\nu} $ such that the pullback of the interval containing $ \frac{1}{s}\sum_{i=1}^{s} r_{p}(\theta_{i},\alpha) $ contains $ \widehat{\alpha},$ and furthermore this point $ \widehat{\alpha} $ can be chosen to be at most $ \nu $ from the point $ \alpha.$ Now using the observation from the earlier attempt, \cref{eq:empirical_reward_diff}, we have that for such a contract $ \widehat{\alpha}\in \cC_{\nu} $,
\begin{align*}
    &\tfrac{1}{s}\textstyle\sum_{i=1}^{s}\negmedspace |u_{p}(\theta_i,\alpha)-u_{p}(\theta_i,\widehat{\alpha})|=
    \\
    &
    \tfrac{1}{s}\textstyle\sum_{i=1}^{s}\negmedspace
    \left|\textstyle\sum_{j=1}^{m}f_{i^{*}(\theta_{i}, \alpha), j}(1-\alpha)r_{j}-f_{i^{*}(\theta_{i}, \widehat{\alpha}), j}(1-\widehat{\alpha})r_{j}  \right|
    \\
    &\leq
    \tfrac{(1-\alpha)}{s}
    \textstyle\sum_{i=1}^{s}\negmedspace\left|\textstyle\sum_{j=1}^{m}\left(f_{i^{*}(\theta_{i}, \alpha), j}r_{j}-f_{i^{*}(\theta_{i}, \widehat{\alpha}), j}r_{j}\right)  \right|
    +\nu
\end{align*}
 Furthermore, since the empirical reward is non-decreasing, and $ \widehat{\alpha}\leq \alpha $ (or for $\alpha \leq \widehat{\alpha}$ with the order switched), implying that $ \sum_{j=1}^{m}(f_{i^{*}(\theta_{i}, \alpha), j}r_{j}-f_{i^{*}(\theta_{i}, \widehat{\alpha}), j}r_{j})\geq 0 $, we can drop the absolute value in the first term. Thus, we have that
\begin{align*}
    &(1-\alpha)\frac{1}{s}
    \textstyle\sum_{i=1}^{s}\left|\textstyle\sum_{j=1}^{m}\left(f_{i^{*}(\theta_{i}, \alpha), j}r_{j}-f_{i^{*}(\theta_{i}, \widehat{\alpha}), j}r_{j}\right)  \right|
    \\
    &=(1-\alpha)\frac{1}{s}
    \textstyle\sum_{i=1}^{s}\textstyle\sum_{j=1}^{m}\left(f_{i^{*}(\theta_{i}, \alpha), j}r_{j}-f_{i^{*}(\theta_{i}, \widehat{\alpha}), j}r_{j}\right)
    \\
    &=(1-\alpha)(\frac{1}{s}\textstyle\sum_{i=1}^{s}r_{p}(\theta_{i}, \alpha)-\frac{1}{s}\textstyle\sum_{i=1}^{s}r_{p}(\theta_{i}, \widehat{\alpha}))
\end{align*}
where we can now use that $ \alpha $ and $ \widehat{\alpha} $ were in the pullback of the same interval, so we have that $ \frac{1}{s}\sum_{i=1}^{s}r_{p}(\theta_{i}, \alpha) $ is at most $ \nu $ from $ \frac{1}{s}\sum_{i=1}^{s}r_{p}(\theta_{i}, \widehat{\alpha}) $, showing that
\begin{align*}
    \textstyle\sum_{i=1}^{s} |u_{p}(\theta_i,\alpha)-u_{p}(\theta_i,\widehat{\alpha})|/s\leq 2\nu.
\end{align*}
Now, using that $|u_{p}(\theta_i,\alpha)-u_{p}(\theta_i,\widehat{\alpha})|\leq 1$, we conclude that
\begin{align}
    \sqrt{\textstyle\sum_{i=1}^{s} (u_{p}(\theta_{i},\alpha)-u_{p}(\theta_{i},\widehat{\alpha}))^{2}/s} \leq \sqrt{2\nu},
\end{align}
whereby rescaling $ \nu $ to $ \nu^{2}/2 $ gives us the existence of a cover of size $ O((1/\nu)^{c}) $ for the constant $c=2$, which gives us the desired result.

With the high-level proof idea behind \cref{thm:main} explained, we now proceed to show how it implies the optimal sample complexity bound of the Empirical Utility Maximization (EUM) algorithm, i.e., \cref{cor:erm}.
\subsection{Proof of \cref{cor:erm}}
We now show that the simple Empirical Utility Maximization (EUM) algorithm over an appropriate set of linear contracts (the same as considered in \cite{duetting2025pseudodimensioncontracts}[Lemma 4.3]) gives an efficient algorithm for computing, with probability at least $ 1-\delta $, an $ \eps $ approximation of the best linear contract with the optimal sample complexity bound of \cref{cor:erm}.
Formally, we consider the following algorithm.
\begin{algorithm}[H]
    \caption{EUM for Linear Contracts}
    \label{alg:erm_linear}
    \begin{algorithmic}
        \REQUIRE An oracle for $ u_{p}(\rS,\cdot) $ over a sample  $\rS = (\theta_1, \dots, \theta_s)$, of size $s\geq  6912 \ln{(4/\delta )}/\eps^{2}$.
        \STATE Let $D_{\eps/4}=\{  0,\eps/4,2(\eps/4),\ldots,\lfloor 4/\eps \rfloor \eps/4,1 \} $.\\
        \textbf{Return} $\widehat{\alpha^{\star}}\in \argmax_{\alpha \in D_{\eps/4}} u_p(\rS, \alpha)$.
    \end{algorithmic}
\end{algorithm}
Before we prove \cref{cor:erm}, we need the following lemma, which is due to \cite{Dtting2021CombinatorialC}. For completeness, we provide the proof in \cref{app:empirical_reward_non_decreasing}.
\begin{lemma}\label{lem:empirical_reward_non_decreasing}
    The expected reward of linear contracts is non-decreasing in the contract parameter $ \alpha\in[0,1] $, i.e. for any $ \alpha'\geq\alpha $, it holds that
    \begin{align*}
       \sum_{j=1}^{m}f_{a^{\star}(\theta,\alpha'),j}  r_{j}=r_{p}(\theta,\alpha')\geq r_{p}(\theta,\alpha) = \sum_{j=1}^{m}f_{a^{\star}(\theta,\alpha),j}   r_{j}.
    \end{align*}
\end{lemma}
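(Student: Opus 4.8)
The plan is to prove the lemma by a standard revealed-preference (exchange) argument that exploits the affine dependence of the agent's utility on the linear-contract parameter. First I would note that the linear contract with parameter $\alpha$ pays $t_j = \alpha r_j$, so by \cref{eq:agent_utility} the agent's utility for action $i$ is $u_a(\theta, \alpha r, i) = \alpha R_i - c_i$, where $R_i := \sum_{j=1}^m f_{i,j} r_j$ denotes the expected reward of action $i$. The crucial structural observation is that this is an affine function of $\alpha$ whose slope is exactly $R_i$, so the reward of an action coincides with the slope of its utility line, and comparing two actions reduces to a single-crossing comparison.

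Next, writing $a := a^\star(\theta, \alpha)$ and $a' := a^\star(\theta, \alpha')$, I would use that each of these is an agent-utility-maximizing action at its respective parameter. This yields the two inequalities $\alpha R_a - c_a \geq \alpha R_{a'} - c_{a'}$ (optimality of $a$ at $\alpha$) and $\alpha' R_{a'} - c_{a'} \geq \alpha' R_a - c_a$ (optimality of $a'$ at $\alpha'$). Adding the two inequalities cancels the cost terms $c_a$ and $c_{a'}$ and rearranges to $(\alpha' - \alpha)(R_{a'} - R_a) \geq 0$. Since $\alpha' \geq \alpha$ the factor $\alpha' - \alpha$ is nonnegative, which forces $R_{a'} \geq R_a$, i.e. $r_p(\theta, \alpha') \geq r_p(\theta, \alpha)$, as desired; the case $\alpha' = \alpha$ is trivial and the roles of the two parameters are symmetric.

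The step requiring the most care — though not a deep obstacle — is the tie-breaking convention. Because the agent may have several best responses and $a^\star$ selects the one most favorable to the principal, I must ensure the two revealed-preference inequalities rely only on $a$ and $a'$ belonging to the respective agent-best-response sets, which holds by definition of $a^\star$ regardless of how ties are resolved. Notably, no continuity or monotonicity of $r_p$ in $\alpha$ is invoked anywhere; the entire content of the lemma is the single-crossing consequence of the utility being affine in $\alpha$ with slope equal to the action's reward.
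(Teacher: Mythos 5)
Your proof is correct and takes essentially the same route as the paper's: both are the standard revealed-preference argument that adds the agent's optimality inequalities at $\alpha$ and $\alpha'$, cancels the cost terms, and deduces $(\alpha'-\alpha)\left(R_{a'}-R_{a}\right)\geq 0$, hence $r_{p}(\theta,\alpha')\geq r_{p}(\theta,\alpha)$. Your explicit handling of the trivial case $\alpha'=\alpha$ and the observation that the inequalities only require $a^{\star}$ to lie in the agent's best-response set (so tie-breaking is immaterial) are careful touches the paper leaves implicit, but they do not change the argument.
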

Using \cref{thm:main} and \cref{lem:empirical_reward_non_decreasing}, we now give the proof of \cref{cor:erm}, i.e., that \cref{alg:erm_linear} obtains the optimal sample complexity for learning a linear contract that is $\eps$-close to the optimal contract's utility.
\begin{proof}[Proof of \cref{cor:erm}]
    As noted in \cite{duetting2025pseudodimensioncontracts}[Lemma 4.3] we have that for $ \alpha^{\star}\in[0,1] $ such that $ u_{p}(\cD,\alpha^{\star})=\sup_{\alpha\in [0,1]}u_{p}(\cD,\alpha) -\eps/4 $, it holds for the point $ \alpha'\in D_{\eps/4} $ that is closest to the right of $ \alpha^{\star} $ that $ u_{p}(\cD,\alpha')\geq u_{p}(\cD,\alpha^{\star}).$ This can be seen by the following calculation, using \cref{lem:empirical_reward_non_decreasing} and the fact that $ \alpha' $ is the point in $ D_{\eps/4 } $ closest to the right of $ \alpha^{\star} $:
    \begin{align*}
      &u_p(\cD,\alpha')
      = \e_{\theta\sim \cD}\bigl[\textstyle\sum_{j=1}^{m}f_{a^{\star}(\theta,\alpha'),j} (1-\alpha')r_{j}\bigr]
      \\
      &\geq \e_{\theta\sim \cD}\bigl[\textstyle\sum_{j=1}^{m}f_{a^{\star}(\theta,\alpha'),j} (1-\alpha^{\star})r_{j}\bigr] - \eps/4
      \\
      &\geq \e_{\theta\sim \cD}\bigl[\textstyle\sum_{j=1}^{m}f_{a^{\star}(\theta,\alpha^{\star}),j} (1-\alpha^{\star})r_{j}\bigr] - \eps/4
      \\
      &\geq
        \textstyle\sup_{\alpha\in [0,1]}\e_{\theta\sim \cD}\bigl[\textstyle\sum_{j=1}^{m}f_{a^{\star}(\theta,\alpha),j} (1-\alpha^{\star})r_{j}\bigr] - \eps/2
    \end{align*}
    where the first inequality follows from the fact that $ \alpha' $ is the point in $ D_{\eps/4} $ closest to the right of $ \alpha^{\star} $, the second inequality follows from \cref{lem:empirical_reward_non_decreasing} and $ \alpha'\geq \alpha^{\star} $, so $\sum_{j=1}^{m}f_{a^{\star}(\theta,\alpha'),j} r_{j}\geq \sum_{j=1}^{m}f_{a^{\star}(\theta,\alpha^{\star}),j} r_{j},$ and the last inequality is due to the fact that $ \alpha^{\star} $ is $ \eps/4 $-close to the optimal utility $ \sup_{\alpha\in [0,1]}\e_{\theta\sim \cD}[\sum_{j=1}^{m}f_{a^{\star}(\theta,\alpha),j} (1-\alpha)r_{j}].$
    Thus, we have that there in $ D_{\eps/4} $ exists an $ \alpha' $ such that $ u_{p}(\cD,\alpha')\geq \sup_{\alpha\in [0,1]} u_{p}(\cD,\alpha) -\eps/2$. Now since $ s=\lceil 13824 \ln{(4/\delta )}/\eps^{2}\rceil$, \cref{thm:main} implies that it holds with probability at least $ 1-\delta $, for all $ \alpha \in [0,1] $, that
    \begin{align*}
        |u_p(S,\alpha)-u_p(\cD,\alpha)|\leq \eps/2.
    \end{align*}
    Thus, since $ D_{\eps/4}\subseteq \cC_{\textit{linear}}=[0,1] $, the above event also holds for all the contracts in $ D_{\eps/4} $ with probability at least $ 1-\delta $. Thus, we have that with probability at least $ 1-\delta $, it holds for $ \widehat{\alpha^{\star}} $ that
    \begin{align*}
        &u_p(S,\widehat{\alpha^{\star}})
        \geq
        \textstyle\sup_{\alpha\in D_{\eps/4 }}u_p(S,\alpha)
        \\
        &
        \geq \textstyle\sup_{\alpha\in D_{\eps/4 }} u_p(\cD,\alpha)-\eps/2
        \geq \textstyle\sup_{\alpha\in [0,1]} u_p(\cD,\alpha) -\eps,
    \end{align*}
    where we in the first inequality used that $ \widehat{\alpha^{\star}} $ is the maximizer of the empirical utility over $ D_{\eps/4} $, in the second inequality we used that the empirical utility is close to the expected utility for all linear contracts in $ D_{\eps/4} $ with probability at least $ 1-\delta $, and the last inequality follows from the fact that, as argued above, there exists an $ \alpha' $ in $ D_{\eps/4} $ such that $ u_p(\cD,\alpha')\geq \sup_{\alpha\in [0,1]} u_p(\cD,\alpha) -\eps/2 $.
    Thus, we have that with probability at least $ 1-\delta $, it holds that
    \begin{align*}
        u_p(\cD,\widehat{\alpha^{\star}})\geq \textstyle\sup_{\alpha\in [0,1]} u_p(\cD,\alpha) -\eps.
    \end{align*}
    We furthermore notice that since $ D_{\eps/4} $ contains at most $\lfloor 4/\eps\rfloor+2\leq 6/\eps $ contracts, the algorithm \cref{alg:erm_linear} only has to query the oracle for $ u_{p}(\rS,\cdot) $  at most $ O(\frac{1}{\eps})$-times, as claimed in \cref{cor:erm}, which concludes the proof.
\end{proof}
\subsection{Proof of \cref{thm:main}}
We now proceed to give the proof of \cref{thm:main}. To show \cref{thm:main}, we will use the following lemma giving a bound on the $ L_{2} $ cover of linear contracts of size $ O(1/\nu^2) $, which is the main technical contribution.
\begin{lemma}\label{lem:linear_contracts_cover}
    For any $ 1>\nu>0 $ and $ S=(\theta_{1},\ldots,\theta_{s}) $, there exists a set of linear contracts $ \cC_{\nu}\subset [0,1] $ such that
    \begin{itemize}[noitemsep,topsep=0pt]
        \item $ |\cC_{\nu}|=12/\nu^{2} $
        \item For any linear contract $ \alpha\in[0,1] $, there exists a contract $ \widehat{\alpha}\in \cC_{\nu} $ such that
        \begin{align*}
    \sqrt{\textstyle\sum_{i=1}^{s}|u_{p}(\theta_i,\alpha)-u_{p}(\theta_i,\widehat{\alpha})|^{2}/s}\leq\nu.
\end{align*}
    \end{itemize}
\end{lemma}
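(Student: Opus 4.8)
The plan is to build the required $L_2$ cover in two stages: first construct an $L_1$ cover of size $O(1/\mu)$ that exploits the monotonicity of the empirical reward from \cref{lem:empirical_reward_non_decreasing}, and then upgrade it to an $L_2$ cover using the crude pointwise bound $|u_p(\theta_i,\alpha)-u_p(\theta_i,\widehat\alpha)|\le 1$, which converts an $L_1$ distance of $\eta$ into an $L_2$ distance of $\sqrt{\eta}$. Rescaling the accuracy parameter at the end then produces the claimed $O(1/\nu^2)$ count.

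For the construction I would fix the empirical reward map $R(\alpha):=\tfrac1s\sum_{i=1}^{s}r_p(\theta_i,\alpha)$, which by \cref{lem:empirical_reward_non_decreasing} is non-decreasing on $[0,1]$ and takes values in $[0,1]$. Partition the reward axis $[0,1]$ into $\lceil 1/\mu\rceil$ intervals of length $\mu$, where $\mu$ is a rescaled version of $\nu$ fixed at the end. Since $R$ is monotone, the pullback $R^{-1}(I)$ of each reward interval $I$ is itself a subinterval of $[0,1]$; into my cover $\mathcal{C}$ I would place one representative point from each nonempty pullback together with every point of the uniform grid $\{0,\mu,2\mu,\dots\}\cap[0,1]$. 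This gives $|\mathcal{C}|\le 2/\mu+O(1)$ points, the two sources of points playing complementary roles: the grid handles long pullbacks while the representatives handle pullbacks shorter than $\mu$.

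Given an arbitrary $\alpha\in[0,1]$, let $[a,b]\ni\alpha$ be the pullback of the reward interval containing $R(\alpha)$. The hard part will be producing a single $\widehat\alpha\in\mathcal{C}$ that is simultaneously (i) within $\mu$ of $\alpha$ and (ii) inside $[a,b]$, since both properties are needed downstream. I would argue this by cases: if the grid point immediately to the left of $\alpha$ lies in $[a,b]$ take it, and otherwise if the grid point immediately to the right of $\alpha$ lies in $[a,b]$ take that; in either case the chosen point is within $\mu$ of $\alpha$ by the grid spacing. If both adjacent grid points fall outside $[a,b]$, then since the left neighbor exceeds $\alpha-\mu$ and the right neighbor is at most $\alpha+\mu$, this forces $a>\alpha-\mu$ and $b<\alpha+\mu$, so the entire pullback lies in $(\alpha-\mu,\alpha+\mu)$ and the stored representative of $[a,b]$ already satisfies both (i) and (ii).

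With such $\widehat\alpha$ in hand, the remainder follows the decomposition already recorded in \cref{eq:empirical_reward_diff}: the second term is at most $\mu$ because $|\alpha-\widehat\alpha|\le\mu$, $r_j\in[0,1]$, and $f$ is a distribution, while for the first term property (ii) places $\alpha$ and $\widehat\alpha$ in the same reward interval, so $|R(\alpha)-R(\widehat\alpha)|\le\mu$. Here monotonicity of the per-type reward lets each summand $r_p(\theta_i,\alpha)-r_p(\theta_i,\widehat\alpha)$ keep a fixed sign, so I may pull the absolute value out of the sum and bound the first term by $(1-\alpha)\,|R(\alpha)-R(\widehat\alpha)|\le\mu$. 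Hence the $L_1$ distance is at most $2\mu$, which gives $\tfrac1s\sum_{i=1}^{s}(u_p(\theta_i,\alpha)-u_p(\theta_i,\widehat\alpha))^2\le 2\mu$ and therefore an $L_2$ distance of at most $\sqrt{2\mu}$. Choosing $\mu=\nu^2/2$ makes this at most $\nu$ while keeping $|\mathcal{C}|\le 2/\mu+O(1)=4/\nu^2+O(1)\le 12/\nu^2$, after which I pad $\mathcal{C}$ with arbitrary points to reach exactly $12/\nu^2$, yielding the set $\cC_\nu$ of the statement.
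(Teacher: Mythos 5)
Your proposal is correct and follows essentially the same route as the paper's proof: the same two-part cover (a uniform grid on the $\alpha$-axis plus one representative per pullback of a length-$\mu$ discretization of the monotone empirical reward $R$, justified by \cref{lem:empirical_reward_non_decreasing}), the same decomposition of $|u_{p}(\theta_i,\alpha)-u_{p}(\theta_i,\widehat{\alpha})|$ into a reward-difference term controlled by the common pullback and a $(\alpha-\widehat{\alpha})$ term controlled by the grid, and the same $L_{1}$-to-$L_{2}$ upgrade via boundedness followed by the rescaling $\mu=\nu^{2}/2$. Your explicit left/right grid-neighbor case analysis is a minor presentational variant of the paper's choice of the closest point in $\cC_{\nu}\cap X_{j}$, with the same underlying dichotomy between long and short pullbacks.
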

With \cref{lem:linear_contracts_cover} in hand, we can now prove \cref{thm:main}.
\begin{proof}[Proof of \cref{thm:main}]
To show \cref{thm:main}, we consider the random variable $ \sup_{\alpha \in [0,1]} u_p(\rS,\alpha)-u_p(\cD,\alpha)$ (and $ \sup_{\alpha \in [0,1]} u_p(\cD,\alpha)-u_p(\rS,\alpha) $).
Notice that, by $ r_{j}\in[0,1] $, we have that $ (u_{p}(\cD,\alpha)-u_p(\theta,\alpha))/s \in[-1/s,1/s]$ for any $ \theta\in \rS.$ Thus, by McDiarmid's inequality we obtain that with probability at least $ 1-2\exp(\eps^{2}s/8) $, it holds that
\begin{align}\label{eq:rademacherlinearcontracts3}
&\sup_{\alpha\in [0,1]} u_p(\rS,\alpha)-u_p(\cD,\alpha)
\\
 &\in \e_{\rS\sim \cD^{s}}\bigl[\sup_{\alpha\in [0,1]} u_p(\rS,\alpha)-u_p(\cD,\alpha)\bigr]\pm \eps/2.\nonumber
\end{align}
In order to control the above expectation term, we make the following calculation, starting with a symmetrization step.
To this end, let $ \rS'=(\theta_{1}',\ldots,\theta_{s}')\sim \cD^{s}.$
We then get that
\begin{align}\label{eq:rademacherlinearcontracts4}
    &\e_{\rS\sim \cD^{s}}\bigl[\textstyle\sup_{\alpha\in [0,1]} u_p(\rS,\alpha)-u_p(\cD,\alpha)\bigr]
    \\
    &=\e_{\rS\sim \cD^{s}}\bigl[\textstyle\sup_{\alpha\in [0,1]}\e_{\rS'\sim\cD^{s}}\bigl[ u_p(\rS,\alpha)-u_p(\rS',\alpha)\bigr]\bigr]\nonumber
    \\
    &\leq\e_{\rS\sim \cD^{s}}\bigl[\e_{\rS'\sim\cD^{s}}\bigl[\textstyle\sup_{\alpha\in [0,1]} u_p(\rS,\alpha)-u_p(\rS',\alpha)\bigr]\bigr]\nonumber
    \\
    &=\e_{\rS,\rS'\sim \cD^{s}}\bigl[\textstyle\sup_{\alpha\in [0,1]} \textstyle\sum_{i=1}^{s}\bigl(u_p(\theta_{i},\alpha)-u_p(\theta_{i}',\alpha)\bigr)/s\bigr]\nonumber
    \\
    &=\e_{\rS,\rS'\sim \cD^{s},\sigma}\bigl[\sup_{\alpha\in [0,1]} \textstyle\sum_{i=1}^{s}\sigma_{i}\bigl(u_p(\theta_{i},\alpha)-u_p(\theta_{i}',\alpha)\bigr)/s\bigr]\nonumber
    \\
    &\leq 2
    \e_{\rS\sim \cD^{s},\sigma\sim\{ -1,1\}^{s} }\bigl[\textstyle\sup_{\alpha\in [0,1]} \textstyle\sum_{i=1}^{s}\sigma_{i}u_p(\theta_{i},\alpha)/s\bigr]\nonumber
\end{align}
where the first inequality follows from the fact that taking $ \sup $ inside the expectation only increases the expectation; in the last equality, we used the i.i.d. assumption of the samples $ \rS $ and $ \rS' $, meaning that $ u_{p}(\theta_{i},\alpha)-u_{p}(\theta_{i}',\alpha) $ has the same distribution as $ u_{p}(\theta_{i}',\alpha)-u_{p}(\theta_{i},\alpha) $ for $ i\in[s] $ (and each term being independent); and the last inequality follows from $ \sup $ of the difference being less than the sum of the $ \sup $ of each term and that $ -\sigma_{i}u_{p}(\theta_{i}',\alpha) $ has the same distribution as $ \sigma_{i}u_{p}(\theta_{i},\alpha) $ for $ i\in[s] $ (and each term being independent).
For any realization $ S $ of $ \rS $, we have from, e.g., \cite{patrick}[Proposition 5.3], that
\begin{align}\label{eq:rademacherlinearcontracts1}
    &\e_{\sigma\sim\{ -1,1\}^{s} }\bigl[\textstyle\sup_{\alpha\in [0,1]} \textstyle\sum_{i=1}^{s}\sigma_{i}u_p(\theta_{i},\alpha)/s \bigr]
    \\
    &\leq
    \inf_{\eta\in[0,1/2]}\bigl\{ 4\eta +\frac{12}{\sqrt{s}}\textstyle\int_{\eta}^{1/2}\sqrt{\ln{(N(\cC_{\textit{linear}}, ||\cdot ||_{2,S},\nu ) )}} d\nu \bigr\},\nonumber
\end{align}
where $ N(\cC_{\textit{linear}}, || \cdot ||_{2,S},\nu )$ is the covering number of $ \cC_{\textit{linear}}=[0,1] $ on $ S $ with respect to the $ L_{2} $ norm and precision $ \nu,$ i.e., the smallest number such that there exists a set of contracts $ \cC_{\nu} $ of size $ N(\cC_{\textit{linear}}, || \cdot ||_{2,S},\nu )$ such that for any $ \alpha\in [0,1] $, there exists $ \widehat{\alpha}\in \cC_{\nu} $ such that
$\sqrt{\textstyle\sum_{i=1}^{s}|u_p(\theta_{i},\alpha)-u_p(\theta_{i},\widehat{\alpha})|^{2}/s}\leq \nu.$
We notice that if we can show that $ N(\cC_{\textit{linear}}, || \cdot ||_{2,S},\nu )\leq 12/\nu^{2} $ (which is exactly what \cref{lem:linear_contracts_cover} implies), we obtain by \cref{eq:rademacherlinearcontracts1} that
\begin{align}\label{eq:rademacherlinearcontracts2}
    &\e_{\sigma\sim\{ -1,1\}^{s} }\bigl[\sup_{\alpha\in [0,1]} \textstyle\sum_{i=1}^{s}\sigma_{i}u_p(\theta_{i},\alpha)/s\bigr]
    \\
    &\leq\negmedspace\negmedspace
    \inf_{\eta\in(0,1/2]}\bigl\{ 4\eta +\frac{12}{\sqrt{s}}\textstyle\int_{\eta}^{1/2}\sqrt{\ln{(N([0,1], || \cdot ||_{2,S},\nu ) )}} d\nu \bigr\}\nonumber
    \\
    &\leq\negmedspace\negmedspace
    \inf_{\eta\in(0,1/2]}\bigl\{ 4\eta +\frac{12}{\sqrt{s}}\textstyle\int_{\eta}^{1/2}\sqrt{\ln{\bigl(12/\nu^{2}\bigr)}} d\nu \bigr\} \nonumber
    \\
    &=\negmedspace\negmedspace
    \inf_{\eta\in(0,1/2]}\bigl\{ 4\eta +\frac{12\sqrt{2}}{\sqrt{s}}\textstyle\int_{\eta}^{1/2}\sqrt{\ln{\bigl(\sqrt{12}/\nu\bigr)}} d\nu \bigr\} \nonumber
    \\
    &=\negmedspace\negmedspace\inf_{\eta\in(0,1/2]}\bigl\{ 4\eta +\frac{12\cdot \sqrt{2\cdot 12}}{\sqrt{s}}\textstyle\int_{\eta/\sqrt{12}}^{1/(2\cdot \sqrt{12})}\sqrt{\ln{\bigl(1/\nu'\bigr)}} d\nu' \bigr\} \nonumber
    \\
    &\leq\negmedspace\negmedspace\inf_{\eta\in(0,1/2]}\bigl\{ 4\eta +\frac{12\cdot \sqrt{2\cdot 12}}{\sqrt{s}}\textstyle\int_{0}^{1/(2\cdot \sqrt{12})}\sqrt{\ln{\bigl(1/\nu'\bigr)}} d\nu' \bigr\} \nonumber
    \\
    &\leq\negmedspace\negmedspace \inf_{\eta\in(0,1/2]}\bigl\{ 4\eta +\frac{12\cdot \sqrt{2\cdot 12}}{\sqrt{s}}\frac{1}{4} \bigr\}
    = \frac{6\cdot\sqrt{6}}{\sqrt{s}}\nonumber
\end{align}
the first equality follows from integration by substitution $ \nu'=\nu/\sqrt{12}$, the third to last inequality follows from that $ \int_{0}^{1/(2\cdot \sqrt{12})}\sqrt{\ln{(1/\nu')}} d\nu'\leq 1/4,$ and the last equality follows from $ \inf_{\eta\in(0,1/2]} $ making $ 4\eta $ vanish. We note that we showed the above for any realization $ S $ of $ \rS $, thus it also holds for random $ \rS.$ Now, combining the conclusion of \cref{eq:rademacherlinearcontracts4} and \cref{eq:rademacherlinearcontracts2} we have shown that
\begin{align}\label{eq:rademacherlinearcontracts5}
    \e_{\rS\sim \cD^{s}}\bigl[\sup_{\alpha\in [0,1]} u_p(\rS,\alpha)-u_p(\cD,\alpha)\bigr]\leq\tfrac{12\cdot \sqrt{6}}{\sqrt{s}}
\end{align}
To the end of using the conclusion of \cref{eq:rademacherlinearcontracts3} and \cref{eq:rademacherlinearcontracts5}, we set $ s\geq(2\cdot 12\cdot \sqrt{6})^{2} \ln{(4/\delta )}/\eps^{2}.$ Then by \cref{eq:rademacherlinearcontracts3}, we have that with probability at least $ 1-\delta/2$ over $ \rS$ that
\begin{align*}
    &\textstyle\sup_{\alpha\in [0,1]} u_p(\rS,\alpha)-u_p(\cD,\alpha)
    \\
&\in
    \e_{\rS\sim \cD^{s}}\bigl[\textstyle\sup_{\alpha\in [0,1]} u_p(\rS,\alpha)-u_p(\cD,\alpha)\bigr]\pm \eps/2
\end{align*}
and by \cref{eq:rademacherlinearcontracts5} that
\begin{align}
    0\leq \e_{\rS\sim \cD^{s}}\bigl[\textstyle\sup_{\alpha\in [0,1]} u_p(\rS,\alpha)-u_p(\cD,\alpha)\bigr]\leq \eps/2, \nonumber
\end{align}
where the lower bound follows by for $ \alpha=1 $ the $ u_{p}(\rS,1),u_{p}(\cD,1) =0.$
Thus we have shown that with probability at least $ 1-\delta/2 $ over $ \rS$ that
\begin{align*}
    -\eps \leq \textstyle\sup_{\alpha\in [0,1]} u_p(\rS,\alpha)-u_p(\cD,\alpha)\leq\eps.
\end{align*}
Now, repeating the above argument with $ \sup_{\alpha\in [0,1]} u_p(\cD,\alpha)-u_p(\rS,\alpha) $ gives that with probability at least $ 1-\delta/2 $,
\begin{align*}
   -\eps\leq \textstyle\sup_{\alpha\in [0,1]} u_p(\cD,\alpha)-u_p(\rS,\alpha)\leq \eps,
\end{align*}
which by the union bound concludes the proof.
\end{proof}
With the proof of \cref{thm:main} given using \cref{lem:linear_contracts_cover}, we now proceed to give the proof of \cref{lem:linear_contracts_cover}.
\begin{proof}[Proof of \cref{lem:linear_contracts_cover}]
We first prove an auxiliary $L_{1}$ cover statement. Namely, we show that for any set of agents $ \theta_{1},\ldots,\theta_{s},$ reward vector $ r\in[0,1]^{m},$ and precision parameter $ 0<\nu'<1,$ there exists a set of contracts $ \widetilde{\cC}_{\nu'} \subseteq [0,1]$ of size $|\widetilde{\cC}_{\nu'}|\leq 12/\nu' $ such that for any $ \alpha\in [0,1] $, there exists $ \widehat{\alpha}\in\widetilde{\cC}_{\nu'} $ for which it holds that
\begin{align}\label{lem:linearcontractscover}
    \textstyle\sum_{i=1}^{s}|u_{p}(\theta_i,\alpha)-u_{p}(\theta_i,\widehat{\alpha})|/s\leq \nu'.
\end{align}
We now explain why this auxiliary statement implies \cref{lem:linear_contracts_cover}. Fix $0<\nu<1$. Since $u_{p}(\theta_i,\alpha)= \sum_{j=1}^{m}f_{a^{\star}(\theta_{i},\alpha),j} (1-\alpha)r_{j}\in[0,1]$ for all $ \alpha\in[0,1] $, applying \cref{lem:linearcontractscover} with precision $ \nu'=\nu^{2} $ gives a set $ \widetilde{\cC}_{\nu^{2}} $ of size at most $ 12/\nu^{2} $ such that for any $ \alpha\in [0,1] $, there exists $ \widehat{\alpha}\in\widetilde{\cC}_{\nu^{2}} $ satisfying
\begin{align*}
    &\sqrt{\textstyle\sum_{i=1}^{s}|u_{p}(\theta_i,\alpha)-u_{p}(\theta_i,\widehat{\alpha})|^{2}/s}
    \\
    &\leq \sqrt{\textstyle\sum_{i=1}^{s}|u_{p}(\theta_i,\alpha)-u_{p}(\theta_{i},\widehat{\alpha})|/s} \leq\sqrt{\nu^{2}}=\nu,
\end{align*}
where the first inequality follows from $ |u_{p}(\theta_i,\alpha)-u_{p}(\theta_i,\widehat{\alpha})|\leq 1 $ and the second follows from \cref{lem:linearcontractscover} with precision $ \nu'=\nu^{2} $. Thus $ \widetilde{\cC}_{\nu^{2}} $ is an $L_2$ cover at precision $\nu$ with size at most $12/\nu^{2}$; renaming this set to $ \cC_{\nu} $ establishes \cref{lem:linear_contracts_cover}. It remains to prove the auxiliary statement \cref{lem:linearcontractscover}, to keep notation light we will use $ \nu $ instead of $ \nu' $.
To this end, consider any sequence $ \theta_{1},\ldots,\theta_{s} $ of agents and let $ 0<\nu<1.$

The empirical reward of a linear contract $ t=\alpha r $ can be written as $\sum_{i=1}^{s}r_{p}(\theta_{i},\alpha)/s.$
By \cref{lem:empirical_reward_non_decreasing}, we know that for each $ \theta_{i},$ $ r_{p}(\theta_{i},\alpha) $ is a non-decreasing function in $ \alpha. $
Thus, we also have that the empirical reward $ \sum_{i=1}^{s}r_{p}(\theta_{i},\alpha)/s $ is a non-decreasing function in $ \alpha .$
We now consider two discretizations, $ \widetilde{\cC}_1 $ and $ \widetilde{\cC}_{2} $, of the interval $ [0,1].$

 We first discretize the interval $[0,1]$ into  points $x_{1,i}= i\nu,$ for $ i\in I=\{ 0,\ldots,\lfloor1/\nu\rfloor,\lfloor1/\nu\rfloor+1\}$, with $x_{1,\lfloor1/\nu\rfloor+1}=1$, and set $ \widetilde{\cC}_1=\cup_{i\in I}x_{1,i}.$ Thus, for any $ \alpha\in[0,1],$ there is a point in $ \widetilde{\cC}_1 $ which is at most $ \nu $-close to $ \alpha.$
We now "discretize" the y-axis and take the pullback of $ \sum_{i=1}^{s}r_{p}(\theta_{i},\alpha)/s,$ such that the pullback of the values forms a net for $ \sum_{i=1}^{s}r_{p}(\theta_{i},\alpha)/s.$

    Formally, let $ y_{i}=i\nu$ for $ i\in I=\{ 0,\ldots,\lfloor1/\nu\rfloor,\lfloor1/\nu\rfloor+1\},$ with $ y_{\lfloor1/\nu\rfloor+1}=1.$
    For each $ i\in I$, if there exists a value $ x'\in[0,1] $ such that
    \begin{align}\label{eq:linearcontracts1}
       y_{i} \leq \textstyle\sum_{k=1}^{s}r_{p}(\theta_{k},x')/s< y_{i+1},
    \end{align}
    let $ x_{2,i}$ be any such $ x' $ (else skip this value) (where $ y_{\lfloor 1/\nu \rfloor+2}=1+\nu $). Furthermore, let $ X_{i}=\{ x\in[0,1]:y_{i} \leq \sum_{k=1}^{s}r_{p}(\theta_{k},x)/s< y_{i+1}\}.$ We notice that since $ \sum_{k=1}^{s}r_{p}(\theta_{k},\alpha)/s $ is a non-decreasing function in $ \alpha $, $ X_{i} $ is an interval. Let $ \widetilde{\cC}_{2}=\cup_{i\in I}x_{2,i}.$

    Set the final discretization equal to $ \widetilde{\cC}_{\nu}=\widetilde{\cC}_1\cup \widetilde{\cC}_{2}.$ We notice that by the above construction, we have that $ |\widetilde{\cC}_{\nu}|\leq 2|I|\leq 2(\lfloor1/\nu\rfloor+2)\leq 6/\nu.$

    Now consider any $ \alpha\in [0,1].$
    Now, since $ \sum_{i=1}^{s}r_{p}(\theta_{i},\alpha)/s\in[0,1] $ and $ \cup_{i\in I}[y_{i},y_{i+1})=[0,1+\nu) $, it must be the case that there exists a $ j\in I $ such that $y_{j}\leq \sum_{i=1}^{s}r_{p}(\theta_{i},\alpha)/s\in[0,1] < y_{j+1},$ where $ y_{\lfloor1/\nu\rfloor+2}=1+\nu,$ consider this $ j $ for now.
    By the above construction of $ \widetilde{\cC}_{2} $, we have that there exists $ x_{2,j}\in \widetilde{\cC}_{2} $ such that $ y_{j} \leq \sum_{i=1}^{s}r_{p}(\theta_{i},x_{2,j})/s < y_{j+1},$ implying that $ \widetilde{\cC}_{\nu}\cap X_{j}\not=\emptyset.$ Now let $ \hat{\alpha} $ be the point closest to $ \alpha $ in $ \widetilde{\cC}_{\nu}\cap X_{j}.$
    We observe that if $ \hat{\alpha} \leq \alpha $, then
    \begin{align}\label{eq:linearcontracts4}
       y_{j} \leq \sum_{i=1}^{s}r_{p}(\theta_{i},\widehat{\alpha})/s \leq \sum_{i=1}^{s}r_{p}(\theta_{i},\alpha)/s< y_{j+1},
    \end{align}
    where the first inequality follows by definition of $ \widehat{\alpha}\in X_{j},$ the second inequality follows from $ \hat{\alpha}\leq \alpha $ and \cref{lem:empirical_reward_non_decreasing}, and the last inequality follows by $ \alpha\in X_{j}. $
    We notice that \cref{lem:empirical_reward_non_decreasing} and $ \widehat{\alpha}\leq \alpha $ imply that $ 0\leq r_{p}(\theta_{i},\alpha)/s-r_{p}(\theta_{i},\widehat{\alpha})/s=|r_{p}(\theta_{i},\alpha)/s-r_{p}(\theta_{i},\widehat{\alpha})/s| $. This, combined with \cref{eq:linearcontracts4} implies that
    \begin{align}\label{eq:linearcontracts3}
          &0\leq\textstyle\sum_{i=1}^{s}|r_{p}(\theta_{i},\alpha)/s-r_{p}(\theta_{i},\widehat{\alpha})/s|=
          \\
          &\textstyle\sum_{i=1}^{s}r_{p}(\theta_{i},\alpha)/s-\textstyle\sum_{i=1}^{s}r_{p}(\theta_{i},\widehat{\alpha})/s\leq y_{j+1}-y_{j}\leq \nu.\nonumber
    \end{align}
In the case that $ \hat{\alpha} > \alpha $, we have by a similar argument that
    \begin{align}\label{eq:linearcontracts5}
         &0\leq\textstyle\sum_{i=1}^{s}|r_{p}(\theta_{i},\alpha)/s-r_{p}(\theta_{i},\widehat{\alpha})/s|=
         \\& \textstyle\sum_{i=1}^{s}r_{p}(\theta_{i},\widehat{\alpha})/s-\textstyle\sum_{i=1}^{s}r_{p}(\theta_{i},\alpha)/s\leq y_{j+1}-y_{j}\leq \nu.\nonumber
    \end{align}
We furthermore observe that $ |\hat{\alpha}-\alpha|\leq \nu,$ since $ X_{j} $ is an interval, so if it does not contain a point in $ \widetilde{\cC}_1,$ it must have length strictly less than $ \nu ,$ by the points in $ \widetilde{\cC}_1 $ being at most $ \nu $ from each other. In this case, we have that the point $ \hat{\alpha} $ in $ \widetilde{\cC}_{2} $ from $ X_{j} $ is at most $ \nu $ away from $ \alpha $. Otherwise, $ X_{j} $ contains a point in $ \widetilde{\cC}_1 $, and thus $ \hat{\alpha} $ is at most $ \nu $ away from $ \alpha $, as we choose it as the closest point to $ \alpha $ among the points in $ \widetilde{\cC}_{\nu}\cap X_{j} $.
    Now, using the above observations, we conclude that
    \begin{align*}
    &\textstyle\sum_{i=1}^{s}\negmedspace |u_{p}(\theta_i,\alpha)-u_{p}(\theta_{i},\widehat{\alpha})|/s
     \\
     &=
     \textstyle\sum_{i=1}^{s}|\textstyle\sum_{j=1}^{m}f_{a^{\star}(\theta_i,\alpha),j} (1-\alpha)r_{j}
     \\
     &
     -\textstyle\sum_{j=1}^{m}f_{a^{\star}(\theta_i,\widehat{\alpha}),j} (1-\widehat{\alpha})r_{j}|/s
     \\
     &\leq
    \textstyle\sum_{i=1}^{s}|\textstyle\sum_{j=1}^{m}(f_{a^{\star}(\theta_i,\alpha),j}-f_{a^{\star}(\theta_i,\widehat{\alpha}),j}) (1-\alpha)r_{j}|/s
    \\
     &+
    \textstyle\sum_{i=1}^{s}|\textstyle\sum_{j=1}^{m}f_{a^{\star}(\theta_i,\widehat{\alpha}),j} (\alpha-\widehat{\alpha})r_{j}|/s
     \\
     &\leq
     (1-\alpha)\textstyle\sum_{i=1}^{s}|r_{p}(\theta_{i},\alpha)-r_{p}(\theta_{i},\widehat{\alpha})|/s
     \\
     &+
    \textstyle\sum_{i=1}^{s}\textstyle\sum_{j=1}^{m}f_{a^{\star}(\theta_i,\widehat{\alpha}),j} |(\alpha-\widehat{\alpha})|r_{j}/s
    \leq 2\nu,
    \end{align*}
    where the first equality follows from the definition of the principal's utility, the first inequality follows from the triangle inequality, the second inequality uses in the first sum the definition of the reward of the principal and in the second sum the triangle inequality $ m $ times, and the third inequality uses in the first sum \cref{eq:linearcontracts3} or \cref{eq:linearcontracts5} and in the second sum that $ |\alpha-\widehat{\alpha}|\leq \nu $, $ |r_{j}|\leq1 $, and that $ f_{a^{\star}(\theta_{i},\widehat{\alpha})} $ forms a probability distribution.
    Thus, we have shown that $ \widetilde{\cC}_{\nu} $ is a cover for the linear contracts $ [0,1]$ on the agents $ \theta_{1},\ldots,\theta_{s},$ in $ L_{1},$ with precision $ 2\nu,$ and size $ |\widetilde{\cC}_{\nu}|\leq 6/\nu.$ Rescaling $ \nu $ to $ \nu/2 $ concludes the proof of \cref{lem:linearcontractscover}, the claim of the size above \cref{lem:linearcontractscover}, and concludes the proof of \cref{lem:linear_contracts_cover}.
\end{proof}

\section*{Acknowledgements}
While this work was carried out,
Mikael M{\o}ller H{\o}gsgaard was supported by an Internationalisation Fellowship from the Carlsberg Foundation.
Furthermore,
Mikael M{\o}ller H{\o}gsgaard was supported by the European Union (ERC, TUCLA, 101125203).
Views and opinions expressed are however those of the author(s) only and do not necessarily reflect those of the European Union or the European Research Council.
Neither the European Union nor the granting authority can be held responsible for them.
Lastly, Mikael M{\o}ller H{\o}gsgaard was also supported by Independent Research Fund Denmark (DFF) Sapere Aude Research Leader grant No.\ 9064-00068B.

\bibliographystyle{icml2026}
\bibliography{../refs.bib}

\newpage
\appendix
\section{Appendix}\label{app:empirical_reward_non_decreasing}
In this appendix, we prove \cref{lem:empirical_reward_non_decreasing}, which we restate here for convenience.
\begin{lemma}
    The expected reward of linear contracts is non-decreasing in the contract parameter $ \alpha\in[0,1] $, i.e., for any $ \alpha'>\alpha$ it holds that
    \begin{align*}
        \sum_{j=1}^{m}f_{a^{\star}(\theta,\alpha'),j}    r_{j}=r_{p}(\theta,\alpha')\geq r_{p}(\theta,\alpha) = \sum_{j=1}^{m}f_{a^{\star}(\theta,\alpha),j}   r_{j}.
    \end{align*}
\end{lemma}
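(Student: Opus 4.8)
The plan is to prove this via a simple revealed-preference (monotone comparative statics) argument, exploiting that under a linear contract the agent's utility is \emph{affine} in the contract parameter $\alpha$, with slope equal precisely to the expected reward of the chosen action. Concretely, writing $R_i := \sum_{j=1}^{m} f_{i,j} r_j$ for the expected reward of action $i$, the agent's utility for action $i$ under the contract $\alpha r$ is $u_a(\theta, \alpha r, i) = \alpha R_i - c_i$, an affine function of $\alpha$ whose slope is exactly $R_i$, i.e. the value $r_p(\theta, \alpha)$ would take at that action. Thus showing $r_p(\theta, \cdot)$ is non-decreasing amounts to showing that as $\alpha$ increases the agent's selected action has weakly larger slope $R_i$ — equivalently, the active line of the upper envelope $\max_i (\alpha R_i - c_i)$ has non-decreasing slope, which is just the convexity of that envelope.

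First I would fix $\alpha' > \alpha$ and set $i = a^{\star}(\theta, \alpha)$ and $i' = a^{\star}(\theta, \alpha')$; the goal is $R_{i'} \geq R_i$. Since $a^{\star}(\theta, \alpha)$ selects an action maximizing the agent's utility at $\alpha$ (tie-breaking aside), optimality of $i$ at $\alpha$ gives $\alpha R_i - c_i \geq \alpha R_{i'} - c_{i'}$, and optimality of $i'$ at $\alpha'$ gives $\alpha' R_{i'} - c_{i'} \geq \alpha' R_i - c_i$. Adding these two inequalities cancels the cost terms and yields $(\alpha' - \alpha) R_{i'} \geq (\alpha' - \alpha) R_i$; dividing by $\alpha' - \alpha > 0$ gives $R_{i'} \geq R_i$, which is precisely $r_p(\theta, \alpha') \geq r_p(\theta, \alpha)$.

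The only point requiring care is the tie-breaking rule: $a^{\star}$ is defined to select, among the agent's best responses, the one most favorable to the principal, so it need not be the unique agent-optimal action. However, the argument uses only that $i$ and $i'$ are \emph{some} maximizer of the agent's utility at their respective parameter values, which holds by definition of $a^{\star}$; the favorable tie-breaking plays no role in the two inequalities above. I expect no genuine obstacle — the statement is essentially a two-line consequence of revealed preference — but I would verify the degenerate cases (both inequalities tight, forcing $R_{i'} = R_i$, or several actions sharing the same reward), all of which are consistent with the weak monotonicity claimed.
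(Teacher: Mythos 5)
Your proposal is correct and is essentially the same argument as the paper's proof: both fix $\alpha' > \alpha$, invoke the agent's optimality of $a^{\star}(\theta,\alpha)$ at $\alpha$ and of $a^{\star}(\theta,\alpha')$ at $\alpha'$, add the two revealed-preference inequalities so the costs cancel, and divide by $\alpha'-\alpha>0$. Your added remarks on tie-breaking (only agent-optimality is used, not the principal-favorable selection) and the upper-envelope convexity intuition are accurate but do not change the proof.
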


\begin{proof}[Proof of \cref{lem:empirical_reward_non_decreasing}]
    We first recall that the utility of a linear contract $ \alpha $ for an agent is given by
    \begin{align*}
        \sum_{j=1}^{m}f_{a^{\star}(\theta,\alpha),j} \alpha r_{j}-c_{a^{\star}(\theta,\alpha)},
    \end{align*}
    where $ a^{\star}(\theta,\alpha) $ is chosen as the action that maximizes the $ \sum_{j=1}^{m}f_{i,j} \alpha r_{j}-c_{i}, $ over $ i $.
    Let $ \alpha'> \alpha $. We then have that if the agent is offered the contract with parameter $ \alpha' $, then the agent will choose the action $ a^{\star}(\theta,\alpha') $ which maximizes her utility, i.e.,
    \begin{align*}
      \sum_{j=1}^{m}f_{a^{\star}(\theta,\alpha'),j} \alpha'r_{j}-c_{a^{\star}(\theta,\alpha')} \geq \sum_{j=1}^{m}f_{a^{\star}(\theta,\alpha),j} \alpha'r_{j}-c_{a^{\star}(\theta,\alpha)}
      \\
      \Rightarrow
         \sum_{j=1}^{m}f_{a^{\star}(\theta,\alpha'),j} \alpha'r_{j}-c_{a^{\star}(\theta,\alpha')} -\left( \sum_{j=1}^{m}f_{a^{\star}(\theta,\alpha),j} \alpha'r_{j}-c_{a^{\star}(\theta,\alpha)} \right)\geq0.
    \end{align*}
    Furthermore, if the agent is offered the contract with parameter $ \alpha $, then she will choose the action $ a^{\star}(\theta,\alpha) $ which maximizes her utility, i.e.,
        \begin{align*}
      \sum_{j=1}^{m}f_{a^{\star}(\theta,\alpha),j} \alpha r_{j}-c_{a^{\star}(\theta,\alpha)} \geq \sum_{j=1}^{m}f_{a^{\star}(\theta,\alpha'),j} \alpha r_{j}-c_{a^{\star}(\theta,\alpha')}
      \\
      \Rightarrow
        \sum_{j=1}^{m}f_{a^{\star}(\theta,\alpha),j} \alpha r_{j}-c_{a^{\star}(\theta,\alpha)} -\left( \sum_{j=1}^{m}f_{a^{\star}(\theta,\alpha'),j} \alpha r_{j}-c_{a^{\star}(\theta,\alpha')}\right) \geq 0.
    \end{align*}
    Adding the latter two inequalities in the two above equations we get that
    \begin{align*}
         0\leq \sum_{j=1}^{m}f_{a^{\star}(\theta,\alpha'),j} \alpha'r_{j}-c_{a^{\star}(\theta,\alpha')} -\left( \sum_{j=1}^{m}f_{a^{\star}(\theta,\alpha),j} \alpha'r_{j}-c_{a^{\star}(\theta,\alpha)} \right)
         \\
         +
        \sum_{j=1}^{m}f_{a^{\star}(\theta,\alpha),j} \alpha r_{j}-c_{a^{\star}(\theta,\alpha)} -\left( \sum_{j=1}^{m}f_{a^{\star}(\theta,\alpha'),j} \alpha r_{j}-c_{a^{\star}(\theta,\alpha')}\right)
        \\
        =\sum_{j=1}^{m}f_{a^{\star}(\theta,\alpha'),j} (\alpha'-\alpha)r_{j}+\sum_{j=1}^{m}f_{a^{\star}(\theta,\alpha),j} (\alpha-\alpha')r_{j}
        \\
        \underbrace{\Rightarrow}_{\text{dividing with } \alpha'-\alpha>0}
        0\leq \sum_{j=1}^{m}f_{a^{\star}(\theta,\alpha'),j} r_{j}
        -\sum_{j=1}^{m}f_{a^{\star}(\theta,\alpha),j} r_{j},
    \end{align*}
    implying that
    \begin{align*}
        \sum_{j=1}^{m}f_{a^{\star}(\theta,\alpha'),j} r_{j}\geq \sum_{j=1}^{m}f_{a^{\star}(\theta,\alpha),j} r_{j}.
    \end{align*}
    which, since $ \alpha' > \alpha $, shows that the expected reward is non-decreasing in the contract parameter $ \alpha $, and concludes the proof of \cref{lem:empirical_reward_non_decreasing}.
\end{proof}

\section{Definition of Pseudo-Dimension of Contracts}\label{sec:Auxiliarystatements}

In this section, we restate the definition of pseudo-dimension from \cite{duetting2025pseudodimensioncontracts} for easy lookup.

\begin{definition}[Pseudo-Dimension of Contracts \cite{duetting2025pseudodimensioncontracts}[Definition 3.5]]\label{def:pseudodimension}
The pseudo-dimension of a contract class $ \cC $ with respect to an agent type space $ \Theta $ is the largest integer $ d $ such that there exists a set of agents $ \theta_{1},\ldots,\theta_{d}\in \Theta $ and real numbers $ z_{1},\ldots,z_{d}\in\mathbb{R} $ such that for any binary vector $ b\in\{ 0,1\}^{d},$ there exists a contract $ t_{b}\in \cC $ such that for all $ i\in[d],$ it holds that
\begin{align*}
    u_{p}(\theta_{i},t_{b})\geq z_{i} \text{ if } b_{i}=1, \text{ and } u_{p}(\theta_{i},t_{b})< z_{i} \text{ if } b_{i}=0.
\end{align*}
If no such largest integer exists, then the pseudo-dimension is infinite.
\end{definition}

\end{document}